\numberwithin{equation}{section} 
\DeclareRobustCommand\cyr{%
  \renewcommand\rmdefault{wncyr}%
  \renewcommand\sfdefault{wncyss}%
  \renewcommand\encodingdefault{OT2}%
  \normalfont
  \selectfont}
\DeclareTextFontCommand{\textcyr}{\cyr}
\def\mcB{\mathcal{B}}
\def\bbR{\mathbb{R}}
\def\bbN{\mathbb{N}}
\def\bbC{\mathbb{C}}
\def\bbP{\mathbb{P}}
\def\fkx{\mathfrak{x}}
\def\fky{\mathfrak{y}}
\def\fku{\mathfrak{u}}
\def\fkt{\mathfrak{t}}
\def\fks{\mathfrak{s}}
\def\Oh{\mathcal{O}}
\DeclareMathOperator{\dist}{\mathrm{dist}}
\DeclareMathOperator{\distTV}{\mathrm{dist}_{\mathrm{TV}}}
\def\chev{\textrm{\cyr Ch}}
\def\chevI{\textrm{\cyr Ch}I}
\def\bcomp{\textrm{\cyr B}}
\def\cond{\texttt{C}}
\def\enumber{\mathrm{e}}
\DeclarePairedDelimiter\norm{\lVert}{\rVert}
\title[On the Error of Random Sampling: Uniformly Distributed Random Points on Parametric Curves]{On the Error of Random Sampling:\texorpdfstring{\\}{ }Uniformly Distributed Random Points on Parametric Curves}
\author{Apostolos Chalkis}
\affiliation{%
\institution{National \& Kapodistrian Univ. Athens\\ GeomScale org}
\streetaddress{Panepistimiopolis}
\city{Athens}
\postcode{15784}
\country{Greece}}
\email{achalkis@di.uoa.gr}
\author{Christina Katsamaki}
\affiliation{
\institution{Inria Paris \& IMJ-PRG\\Sorbonne Universit\'e}
\streetaddress{4 place Jussieu}
\city{Paris}
\postcode{F-75005}
\country{France}}
\email{chistina.katsamaki@inria.fr}
\author{Josué Tonelli-Cueto}
\affiliation{
\institution{Inria Paris \& IMJ-PRG\\Sorbonne Universit\'e}
\streetaddress{4 place Jussieu}
\city{Paris}
\postcode{F-75005}
\country{France}}
\email{josue.tonelli.cueto@bizkaia.eu}
\begin{abstract}
Given a parametric polynomial curve $\gamma:[a,b]\rightarrow \mathbb{R}^n$, how can we sample a random point $\mathfrak{x}\in \mathrm{im}(\gamma)$ in such a way that it is distributed uniformly with respect to the arc-length? Unfortunately, we cannot sample exactly such a point—even assuming we can perform exact arithmetic operations. So we end up with the following question: how does the method we choose affect the quality of the approximate sample we obtain? In practice, there are many answers. However, in theory, there are still gaps in our understanding. In this paper, we address this question from the point of view of complexity theory, providing bounds in terms of the size of the desired error.
\end{abstract}
\date{}
\def\th@plain{%
  \thm@notefont{}
  \slshape 
}
\def\th@definition{%
  \thm@notefont{}
  \normalfont 
}
\theoremstyle{plain}
\newtheorem{lem}{Lemma}[section]
\newtheorem{prop}[lem]{Proposition}
\newtheorem{theo}[lem]{Theorem}
\theoremstyle{definition}
\newtheorem{defi}[lem]{Definition}
\theoremstyle{remark}
\newtheorem{remark}[lem]{Remark}
\newcommand{\eproof}{\hfill\qed}
\let\oldnl\nl
\newcommand{\nonl}{\renewcommand{\nl}{\let\nl\oldnl}}
\let\original@algocf@latexcaption\algocf@latexcaption
\long\def\algocf@latexcaption#1[#2]{%
  \@ifundefined{NR@gettitle}{%
    \def\@currentlabelname{#2}%
  }{%
    \NR@gettitle{#2}%
  }%
  \original@algocf@latexcaption{#1}[{#2}]%
}
\renewcommand\footnotetextcopyrightpermission[1]{} 
\keywords{parametric curve, sampling, sampling error, Chebyshev, approximation
}
\begin{document}

\maketitle

\section{Introduction}

Given a parametric polynomial curve $\gamma: I:=[a,b]\rightarrow \bbR^n$, we are interested in generating a random point $\fkx\in \gamma(I)$ that is uniformly distributed with respect the arc-length. To do this, we only need to sample a random variable $\fkt\in I$ with density proportional to the speed of the curve $\norm{\gamma'}_2$. Even if we perform exact arithmetic operations using real numbers with infinite precision, this problem does not admit an exact solution---the integral $\int \norm{\gamma'}_2$ cannot be expressed in terms of elementary functions. The goal of this paper is to estimate how much the generated random sample differs from the one that we want and how does the desired error affect the complexity.

\subsection{Random Samples: Why do we care?}

At this point, to avoid possible confusion, we want to clarify that we are studying random sampling on a parametric curve. Unfortunately, in the literature, the term `sample' carries two meanings. In the context of computational statistics, this refers to generating a random point~\cite{samplingVempala}. In the context of parametric curves, this refers to generating a finite subset of points that captures the curve~\cite{paganiscott2018}. Of course, random sampling can be used for generating samples in the second sense, but it is not the best method since randomness might produce points too near to the ones already produced.

Now, being able to sample random points in an algebraic variety plays an important role in the application of topological data analysis (TDA) to algebraic geometry~\cite{breidingkalisniksturmfelsweinstein2018}---if we sample enough random points on an algebraic variety, we can determine its topology as shown in Figure~\ref{fig:curve_sampled}. However, in order to bridge the gap between the theoretical assumptions on the random samples (as those stated in~\cite{niyogismaleweinberger2008}) and the random samples that we actually generate, we have to understand how imperfect our generation of these random samples is. This paper is a first step towards filling this gap in our theoretical understanding in the simplest case: sampling random points on parametric polynomial curves.

\subsection{Errors in Random Sampling: State of the art}

Sampling from a density is an important and well-studied problem in computational statistics. There are many methods to sample according to a certain distribution: Acceptance-Rejection (AR) method~\cite{Leydold98}, Adaptive Rejection Sampling~\cite{Gilks92}, Slice Sampling~\cite{Neal03}, etc. Currently, the state-of-the-art samplers are the so-called
Markov Chain Monte Carlo (MCMC) algorithms. They have plenty of success stories, such as Hamiltonian Monte Carlo~\cite{NealMCMC}, Hit-and-Run~\cite{Smith96} and Metropolis-Hastings~\cite{chibgreenberg1995}. 

However, most of the aforementioned methods apply only for log-concave distributions~\cite{lovasz2006hit,Lovasz06,chen2019optimal,lee2018algorithmic,dwivedi2019log}. The algorithm in~\cite{pmlr-v99-mangoubi19a} can be used to sample from (multivariate) non-convex density functions, but it does not handle the case where the density is restricted to an interval (or a bounded set in general). Moreover, for the univariate setting, the existing MCMC algorithms~\cite{Johndrow18} either do not provide any error guarantees or could lead to arbitrarily high run-times. In particular, this means that the error-control for generating random samples on a parametric curve is an open problem.

\subsection{Analyzed Method and Contributions}

We analyze the method proposed by Olver and Townsed~\cite{Olver13}. This method is not limited to random sampling on parametric curves, but it holds for general density functions. They demonstrate empirically its efficiency, however, no theoretical analysis is given. We aim to do this in the particular case of sampling random points on parametric polynomial curves, uniformly with respect to the arc-length.

The underlying idea of Olver and Townsed~\cite{Olver13} is to use Chebyshev approximations to make inverse transform sampling tractable at the cost of an error in the produced random sample. The cumulative distribution function might not be expressible in terms of elementary functions---and this happens in the case of interest.

We provide the first error analysis for this method in the case where we are sampling random points on a parametric polynomial curve. 
For this error analysis, we work in the BSS model (see \S\ref{sec:BSS}) for convenience. This model is suitable for developing a complexity theory over the real numbers, since a BSS machine is like a Turing machine, but it can operate with real numbers and it performs arithmetic operations and comparisons at unit cost. The main result of this paper is the following one:

\begin{theo}
Let $\gamma :I \rightarrow \mathbb{R}^n$ be a polynomial parameterized curve of degree $d$. The algorithm \nameref{alg:newsampler} samples points from $\gamma$ uniformly with respect to the arc-length by performing
\[
\Oh(\ell^3(1+\log d\cond(\gamma))^3d^3\cond(\gamma)^3)
\]
arithmetic operations and with error $2^{-\ell}$ with respect to the total variation distance, where $\cond(\gamma)$, the condition number for sampling $\gamma$, is given Definition~\ref{defi:conditiongammasampling}.
\end{theo}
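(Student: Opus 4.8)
The plan is to realize inverse transform sampling for the arc-length density and to track how the unavoidable approximations propagate to the total variation distance. The target density is $f(\fkt)=s(\fkt)/L$ with $s:=\norm{\gamma'}_2$ and $L=\int_a^b s$, and its cumulative distribution function $\Phi(\fkx)=\tfrac1L\int_a^\fkx s$ is (a normalization of) the arc-length. Drawing $\fku\sim U[0,1]$ and returning the solution of $\Phi(\fkx)=\fku$ would be exact; the obstruction is that $s=\sqrt{q}$, where $q=\sum_i(\gamma_i')^2$ is a polynomial of degree $2(\deg\gamma-1)$, is not itself a polynomial, so neither $\Phi$ nor its inverse is available in closed form.

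First I would reduce the sampling error to two function-approximation quantities. I would replace $s$ by a degree-$N$ Chebyshev approximation $p$, set $\tilde L=\int_a^b p$, $\tilde f=p/\tilde L$, and $\tilde\Phi(\fkx)=\tfrac1{\tilde L}\int_a^\fkx p$, and invert $\tilde\Phi$ by bisection to an $\fkx$-tolerance $\eta$. A short lemma then bounds the total variation distance of the output from $f$ by a density term and a root-finding term:
\[
\distTV \;\le\; \frac{b-a}{L}\,\norm{s-p}_\infty \;+\; \eta\,\norm{\tilde f}_\infty ,
\]
the first from $\norm{f-\tilde f}_1\le\tfrac2L\norm{s-p}_1$ and the second from the observation that an $\eta$-accurate inverse perturbs the output distribution function by at most $\eta\norm{\tilde f}_\infty$. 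Both prefactors are governed by $\cond(\gamma)$, which measures (in normalized form) the reciprocal distance from $I$ to the parameters at which $\gamma'$ vanishes, and hence controls the distortion $\tfrac{b-a}{L}$ and the flatness $\norm{\tilde f}_\infty$ of the density.

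The analytic heart is the Chebyshev step. Since $q$ is a nonnegative polynomial on $\bbR$, $s=\sqrt{q}$ extends holomorphically to $\bbC$ minus branch cuts emanating from the complex zeros of $q$, which are exactly the complex singular parameters of $\gamma$. I would lower-bound the size of the largest Bernstein ellipse $E_\rho$ around $I$ that avoids these branch points, and upper-bound $\max_{E_\rho}\abs{s}$, both in terms of $\cond(\gamma)$, so that the standard Chebyshev truncation estimate gives $\norm{s-p_N}_\infty\le C\rho^{-N}$ with $\log\rho\ge c/\cond(\gamma)$ for an absolute $c>0$. Feeding this into the lemma and demanding error $2^{-\ell}$ forces a degree $N=\Oh(\cond(\gamma)(\ell+\log\cond(\gamma)))$ and a comparable bisection depth, with constants depending only on the degree of $\gamma$.

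Finally I would bound the number of BSS operations. A subtlety is that $\sqrt{\,\cdot\,}$ is not a field operation, so every evaluation of $s$ at a Chebyshev node must be replaced by a Newton square root carried out with $+,-,\times,/$ to the precision demanded above. Summing the cost of these evaluations, of forming and integrating the $N$ Chebyshev coefficients into $\tilde\Phi$, and of the per-sample bisection (each step an $\Oh(N)$ Clenshaw evaluation of $\tilde\Phi$), the dominant term is the work quadratic in the degree $N$; substituting $N=\Oh(\cond(\gamma)(\ell+\log\cond(\gamma)))$ together with the required evaluation and bisection precisions and simplifying yields the claimed $\Oh(\ell^3(1+\log\cond(\gamma))^3\cond(\gamma)^2)$. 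The main obstacle is the analytic estimate of the preceding paragraph: relating the geometric condition number to the Chebyshev convergence rate uniformly over all configurations of complex zeros of $q$—in particular zeros that nearly touch $I$—and doing so consistently with the simultaneous degradation of $\tfrac{b-a}{L}$ and $\norm{\tilde f}_\infty$, so that the single quantity $\cond(\gamma)$ controls both the approximation rate and the root-finding difficulty.
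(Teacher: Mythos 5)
Your overall architecture---Chebyshev approximation of the speed on a Bernstein ellipse whose size is controlled by $\cond(\gamma)$, inverse transform sampling by bisection, and a two-term error decomposition---is the same as the paper's. But your root-finding lemma has a genuine gap: the term $\eta\,\norm{\tilde f}_\infty$ does not bound the total variation error of the inversion step. Your justification (``an $\eta$-accurate inverse perturbs the output distribution function by at most $\eta\norm{\tilde f}_\infty$'') is a statement about the Kolmogorov (sup-CDF) distance, which does not control TV. Indeed, if the bisection output is any deterministic function of $\fku$ (say the midpoint of the final interval), the output is supported on at most finitely many points, so its TV distance from the continuous target is exactly $1$ no matter how small $\eta$ is---this is precisely Remark~\ref{rem:cont} of the paper, and it is the central subtlety the paper is built to handle. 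The necessary fix is to output a \emph{uniformly random} point in the final bisection interval, and even then a sup-norm bound on $\tilde f$ is insufficient: a density oscillating between $0$ and $2$ on each dyadic subinterval stays at TV distance $1/2$ from the piecewise-uniform output while $\eta\norm{\tilde f}_\infty\to 0$. What controls the error is the \emph{oscillation} of the density on each final interval, i.e., its derivative: the paper proves $\distTV(\fkx,\tilde\fkx_{\ell_B})\leq 2^{-\ell_B}\abs{b-a}\,\norm{\tilde\varphi'}_\infty$ (Theorem~\ref{theo:bisection}) via the partition inequality (Proposition~\ref{prop:partitionsample}) and the mean value theorem. This forces an extra step you are missing entirely: bounding $\norm{\tilde\varphi'}_\infty$, which the paper does by combining $\norm{\varphi'}_\infty\leq\cond(\gamma)$ with a Chebyshev derivative estimate $\norm{\chevI_k(\varphi)'-\varphi'}_\infty\leq 16Mk^2(\rho^*)^{3-k}/(\rho^*-1)$ (a Markov-type bound $\norm{\tilde f'}_\infty\leq k^2\norm{\tilde f}_\infty$ would also serve). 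This is not cosmetic bookkeeping: the derivative bound determines the bisection depth $\ell_B$, and it is exactly where the cubic factor $\ell^3(1+\log\cond(\gamma))^3$ in the stated on-line cost $\Oh(\ell_B k)$ comes from; with your (invalid) lemma one would instead predict a quadratic dependence.

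A smaller divergence: the paper works in the BSS model extended with exact square roots, so $\norm{\gamma'(t)}_2$ is evaluated at unit cost, whereas you re-expand each square root by Newton iteration in the pure field model. That route is legitimate in spirit, but then the values at the Chebyshev nodes are only approximate, and you must propagate that perturbation through the interpolation operator (a Lebesgue-constant factor of order $\log k$) and through the derivative bound above---none of which you carry out. Once the derivative-based bisection analysis is restored, the remainder of your outline (ellipse radius with $\log\rho\gtrsim 1/\cond(\gamma)$ as in Theorem~\ref{theo:boundsrho}, degree $k=\Oh(\cond(\gamma)(\ell+\log\cond(\gamma)))$, Clenshaw evaluation at $\Oh(k)$ per bisection step) matches the paper's proof of Theorem~\ref{thm:main}.
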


Our analysis is generic enough to be applicable to more general densities, but we focus on parametric curves since this is the case of interest in our research program. This paper is the first one studying reductions of randomness sources in the BSS model of computation. Even more, we provide an open-source {\tt Matlab} implementation (see \S\ref{sec:implementation}) together with an extended experimental analysis, which confirms the theoretical results.

\subsection{Related Problems}

Although we are analyzing the method of~\cite{Olver13}, we note that our problem ---generating random points in parametric polynomial curves--- is related to a lot of problems in the literature. On the one hand, this problem is related to obtaining arc-length parametrizations and generating deterministic uniform samples---points that are equidistributed with respect the arc-length---of parametric curves. For these problems, there is an extensive literature~\cite{paganiscott2018,figueiredo1995,floaterrasmussen2006,floaterrasmussenreif2007,gravesen1997,piegltiller1997NURBSbook,walterfournier1996} and an analysis of these methods might be possible following our strategy---we only have to control the following $L^1$-norm:
\[
\left\|\|\gamma'(\theta(s))\|_2\theta'(s)-\chi_{[0,L]}(s)\right\|_{1},
\]
where $\theta:[0,L]\rightarrow I$ is a reparametrization of $\gamma:I\rightarrow\bbR^n$. However, we feel that these methods might not generalize easily to higher dimensions, that we plan to deal with in the future. Nevertheless, we leave for future work a careful study of these methods.

\begin{figure}[!t]
	\centering
		\includegraphics[width=0.45\textwidth]{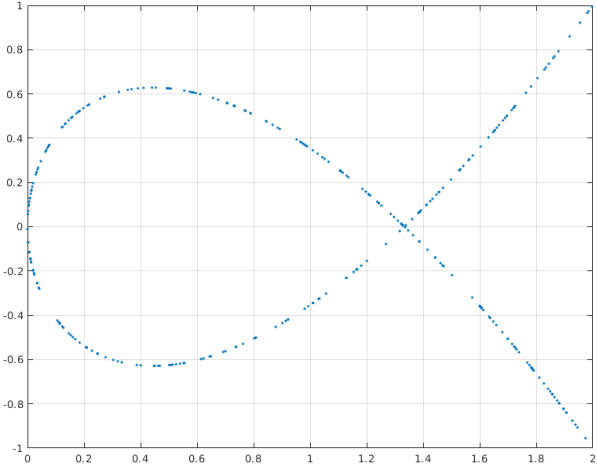}
	\caption{\rm A sample of 300 random points from the curve $\gamma: [-1,1] \rightarrow \bbR^2$ give by $\gamma(t)=(3t^3-2t, 2t^2)$ generated by our algorithm \nameref{alg:newsampler}. For more details on this example see~\ref{subsec:fig1example}. 
	\label{fig:curve_sampled}}
\end{figure}

\subsection*{Notation}
For a real function $h : I\subseteq \mathbb{R} \rightarrow \mathbb{R}$, its $L^1$ norm is $\|h\|_{1}:=\int_\bbR\,|h(x)|\,ds$,
where, by convention, we take $h(x)=0$ for $x\notin I$.
Its $L^\infty$-norm is defined as $\|h\|_\infty:=\sup_{x\in I}|h(x)|$.

Given a random variable $\fkx\in I$, we will write $\fkx\sim \varphi$ to indicate that $\fkx$ is distributed according to $\varphi$, i.e., $\bbP(\fkx\in J)=\int_{J}\varphi$. We will also denote by $\mcB$ the set of Borel subsets of $\bbR$. 

\subsection*{Organization of the paper}
In \S\ref{sec:efficientsampler} we define the total variation distance of two random variables and introduce the notion of an efficient sampler. In \S\ref{sec:tvdistance} we give bounds on the total variation distance. Inverse transform sampling is illustrated in \S\ref{sec:inv}; we also describe how the univariate solving (that inverse transform sampling requires) is done using the bisection method. 
In \S\ref{sec:sampling} we present our curve sampler, study its efficiency and analyze its complexity. We also include a review of Chebyshev approximations. In the end, we present our experimental results~in~\S\ref{sec:implementation} and our conclusions in \S\ref{sec:conclusions}.

\section{How good is a sampling method?\label{sec:efficientsampler}}

Given a continuous random variable $\fkx\in \bbR$ (\emph{target random variable}), we want to construct a sampler---an algorithm---whose output is identically distributed to $\fkx$. If the distribution of $\fkx$ is simple enough, the latter can be easily done. However, in general, we cannot sample $\fkx$ exactly and we can only obtain a random variable $\tilde{\fkx}\in \bbR$ (\emph{sampled random variable}) which behaves approximately like $\fkx$. Hence the following question arises: how well does the sampled random variable $\tilde{\fkx}$ approximate the target random variable $\fkx$? 

\subsection{Total Variation Distance}

The total variation distance measures how much the probabilities of two arbitrary events differ; the smaller the total variation distance is, the harder it is to distinguish the sampled random variable from the target random variable.

\begin{defi}~\cite{vempala2005geometric}
Let $\fkx,\tilde{\fkx}\in \bbR$ be random variables. The \emph{total variation distance} (\emph{TV distance}) of $\fkx$ and $\tilde{\fkx}$ is defined as
\[
\distTV(\fkx,\tilde{\fkx}):=\sup_{B\in\mcB}\left|\bbP(\fkx\in B)-\bbP(\tilde{\fkx}\in B)\right|,\]
where $\mcB$ is the set of Borel subsets of $\bbR$.
\end{defi}

\subsection{What is an efficient sampler? \label{sec:BSS}}

For a sampler to be efficient, we want it to run in time that is polylogarithmic in the error. We use the Blum-Shub-
Smale (BSS) model of computation \cite{BCSSbook}  to avoid problems arising from approximating continuous random variables with discrete ones. In the BSS model, real numbers can be stored exactly as a single unit during computations, and operations with real numbers are done at unit cost. We call \textit{BSS program}, a program, i.e., a finite list of commands, that can be implemented in a BSS machine. 

We introduce the notion of an \textit{efficient sampler}, which will be useful in measuring the performance of our sampling method.

\begin{defi}
\label{def:sampler}
Given a random variable $\fkx\in \bbR$, an \emph{efficient sampler for $\fkx$} is a pair of BSS programs $\textsc{S}:\bbN\times \bbR^k\times[0,1]^l\rightarrow \bbR$ and $\textsc{P}:\bbN\rightarrow \bbR^k$ such that: S1) on input $(\ell,x,u)$, the run-time of $\textsc{S}$ is at most $\mathrm{poly}(\ell)$, S2) on input $\ell$, the run-time of $\textsc{P}$ is at most $\mathrm{exp}(\ell)$, and S3) if $\fku\in[0,1]^l$ is uniformly distributed, then $\fkx_{\ell}:=\textsc{S}(\ell,\textsc{P}(\ell),\fku)$ is a random variable such that
\[
\distTV(\fkx_\ell,\fkx)\leq 2^{-\ell}.
\]
\end{defi}


The program $P$ in Def.~\ref{def:sampler} represents the preprocessing that the sampler requires in order to produce the correct result and $\fku$ the source of randomness. The output of $P$ is used together with the source of randomness $\fku$ as the inputs of the program $S$, which produces the sample. 

\begin{remark}[On-line and off-line computations]
In a more classical conception of an efficient sampler, the program $P$ would be missing. However, we have to take into account that samplers are not intended to be a run-once program, but they are intended to run many times. Because of this, it is reasonable to allow off-line computations---precomputations---, even if these are expensive, as long as we don't have to repeat them. In this way, our definition gives this possibility.
\end{remark}

\subsection{And if we have finite precision?}

If we have finite precision, then the sampled random variable $\tilde{\fkx}$ is discrete, and if this is the case, since the target random variable $\fkx$ is continuous, then
\[
\distTV(\fkx,\tilde{\fkx})=1.
\]
Hence, the TV distance does not allow us to evaluate how a discrete sampler approximates a continuous random variable directly.

A way around this problem is to turn the discrete approximation $\tilde{\fkx}$ into a continuous random variable by adding random noise to the values $\tilde{x_1},\ldots,\tilde{x_a}$ that $\tilde{\fkx}$. To do this, we sample 
\[
x_i+\fky_i,
\]
with probability $\bbP(\tilde{\fkx}=\tilde{x_i})$, where $\fky_1,\ldots,\fky_a$ some set of continuous random variables that we can sample. The resulting random variable will be continuous with density
\[
\sum_{i=1}^a \bbP(\tilde{\fkx}=\tilde{x_i})\delta_{\fky_i}(t-\tilde{x_i}).
\]
Hence, if the $\fky_i$ are simple enough, we have just specified a way of turning our discrete sample random variable into a continuous one for which we can evaluate the quality using the TV distance.

Unfortunately, analyzing in detail the precision needed goes beyond the scope of this paper. However, we note that all the methods produced in this paper---see the next two sections---are of this form and thus, in the future, we will perform a careful study of how the considered random samplers behave under finite precision.

\section{General bounds for the TV distance \label{sec:tvdistance}}
\
Let $I$ be a real interval. To bound the TV distance between two random variables $\fkx,\fky\in I$, we will use two methods: $L^1$-norms and interval partitions. The first one is used to approximate a random variable, by approximating its distribution. The second one approximates a random variable by approximating it in several intervals.

For the rest of this section, we use $\delta_{\fkx}$ and $\delta_{\fky}$ to denote the density functions of $\fkx$ and $\fky$ respectively. 

\subsection[Bounds using L1-norms]{Bounds using $L^1$-norms}

The following proposition shows the main technique that we apply for getting bounds on the TV distance.

\begin{prop}\label{prop:TVL1}
Let $\fkx,\fky\in\bbR$ be continuous random variables. Then
\begin{equation}
    \distTV(\fkx,\fky)\leq \left\|\delta_{\fkx}-\delta_{\fky}\right\|_{1}.
\end{equation}
\end{prop}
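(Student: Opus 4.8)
The plan is to unfold the definition of the total variation distance and bound the supremum over Borel sets by an $L^1$-norm of the difference of densities. Since $\fkx$ and $\fky$ are continuous random variables with density functions $\delta_{\fkx}$ and $\delta_{\fky}$, for any Borel set $B\in\mcB$ we have $\bbP(\fkx\in B)=\int_B \delta_{\fkx}$ and $\bbP(\fky\in B)=\int_B \delta_{\fky}$. Therefore
\[
\left|\bbP(\fkx\in B)-\bbP(\fky\in B)\right|
=\left|\int_B \left(\delta_{\fkx}-\delta_{\fky}\right)\right|.
\]
First I would apply the triangle inequality for integrals (i.e.\ $|\int_B g|\le \int_B |g|$) to get $\left|\int_B(\delta_{\fkx}-\delta_{\fky})\right|\le \int_B |\delta_{\fkx}-\delta_{\fky}|$.

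The key observation is that enlarging the domain of integration from $B$ to all of $\bbR$ only increases the integral of a non-negative integrand, so $\int_B |\delta_{\fkx}-\delta_{\fky}|\le \int_\bbR |\delta_{\fkx}-\delta_{\fky}| = \|\delta_{\fkx}-\delta_{\fky}\|_1$. Crucially, this final bound is independent of the choice of $B$. Hence I would take the supremum over all $B\in\mcB$ on the left-hand side, which yields
\[
\distTV(\fkx,\fky)=\sup_{B\in\mcB}\left|\bbP(\fkx\in B)-\bbP(\fky\in B)\right|\le \|\delta_{\fkx}-\delta_{\fky}\|_1,
\]
completing the argument.

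There is no serious obstacle here; the proposition is essentially the standard fact that the TV distance is at most half the $L^1$-distance of densities (here stated with the weaker constant $1$, which suffices and avoids the optimization that produces the factor $\tfrac12$). The only point requiring a little care is that the two steps—replacing $|\int_B g|$ by $\int_B|g|$ and then extending the domain to $\bbR$—are what loses the sharp constant; one could recover the factor $\tfrac12$ by choosing $B=\{\delta_{\fkx}\ge \delta_{\fky}\}$ and using that both densities integrate to $1$, but since only the upper bound by $\|\delta_{\fkx}-\delta_{\fky}\|_1$ is needed downstream, I would not pursue that refinement. I would simply remark that measurability of the integrand is automatic since the densities are assumed to exist as integrable functions.
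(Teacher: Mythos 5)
Your proof is correct and matches the paper's argument essentially verbatim: express $\bbP(\fkx\in B)$ and $\bbP(\fky\in B)$ as integrals of the densities, bound $\left|\int_B(\delta_{\fkx}-\delta_{\fky})\right|$ by $\|\delta_{\fkx}-\delta_{\fky}\|_1$, and take the supremum over $B\in\mcB$. Your side remark about recovering the sharp factor $\tfrac12$ via $B=\{\delta_{\fkx}\ge\delta_{\fky}\}$ is accurate but, as you note, unnecessary for the paper's purposes.
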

\begin{proof}
Let $B\in\mcB$. By definition, $\bbP(\fkx\in B)=\int_B\delta_{\fkx}$ and $\bbP(\fky\in B)=\int_B\delta_{\fky}$. Therefore, $|\bbP(\fkx\in B)-\bbP(\fky\in B)|=\left|\int_B(\delta_{\fkx}-\delta_{\fky})\right|\leq \left\|\delta_{\fkx}-\delta_{\fky}\right\|_{1}$.
\end{proof}

We recall that a usual way to bound the $L^1$-norm is to use the $L^\infty$-norm. For a function $h:I \subset \mathbb{R} \rightarrow \bbR$,  we have that
\begin{equation}\label{eq:L1Linfbound}
    \|h\|_1\leq\lambda(I)\|h\|_\infty
\end{equation}
where $\lambda$ is Lebesgue's measure---the length.

\subsection{Bounds using partitions}

The following proposition allows us to bound the total variation distance between $\fkx\in I$ and $\fky\in I$ using the information on how these random variables behave on a certain partition $\{J_i\}_{i=1}^k$ of $I$. 

\begin{defi}
Let $\fkx\in I$ be a continuous random variable and $J\subseteq I$. The \emph{restriction of $\fkx$ to $J$}, $\fkx_{|J}$, is the random variable whose density is given by $(\delta_{\fkx})_{|J}/\bbP(\fkx\in J)$, if $\bbP(\fkx\in J)\neq 0$, and by $1/\lambda(J)$, where $\lambda$ is the Lebesgue's measure, otherwise.
\end{defi}

One can see that to sample $\fkx_{|J}$ using $\fkx$ we only have to sample $\fkx$ until it lies on $J$ and, when this happens, output the sampled value. Note that this method requires on average $\bbP(\fkx\in J)^{-1}$ attempts, so the larger $\bbP(\fkx\in J)$ is, the more efficient this method becomes. 

The following proposition deals with the inverse problem: how do the errors in the partition of an interval add up?

\begin{prop}\label{prop:partitionsample}
Let $\fkx,\fky\in I$ be a random variables and $\{J_i\}_{i=1}^k$ be a partition of $I$. Then
\begin{multline}
        \distTV(\fkx,\fky)\leq \sum_{i=1}^k\bbP(\fkx\in J_i)\distTV(\fkx_{|J_i},\fky_{|J_i})\\+\sum_{i=1}^k|\bbP(\fkx\in J_i)-\bbP(\fky\in J_i)|
\end{multline}
\end{prop}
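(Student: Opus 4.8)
The plan is to work directly from the definition of the TV distance as a supremum over Borel sets, reducing everything to a single estimate that holds for each fixed $B$. First I would fix an arbitrary $B\in\mcB$ and decompose it along the partition: since $\{J_i\}_{i=1}^k$ partitions $I$ and both $\fkx,\fky$ are supported on $I$, we have $\bbP(\fkx\in B)=\sum_{i=1}^k\bbP(\fkx\in B\cap J_i)$, and likewise for $\fky$. The key observation is that the restriction $\fkx_{|J_i}$ is defined precisely so that $\bbP(\fkx\in B\cap J_i)=\bbP(\fkx\in J_i)\,\bbP(\fkx_{|J_i}\in B)$, using that $\fkx_{|J_i}$ is supported on $J_i$; this follows at once from the restriction density $(\delta_{\fkx})_{|J_i}/\bbP(\fkx\in J_i)$. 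Writing $p_i:=\bbP(\fkx\in J_i)$ and $q_i:=\bbP(\fky\in J_i)$, this turns the difference into
\[
\bbP(\fkx\in B)-\bbP(\fky\in B)=\sum_{i=1}^k\left(p_i\,\bbP(\fkx_{|J_i}\in B)-q_i\,\bbP(\fky_{|J_i}\in B)\right).
\]

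Next I would apply the standard add-and-subtract trick to each summand, inserting the cross term $p_i\,\bbP(\fky_{|J_i}\in B)$ so that
\[
p_i\,\bbP(\fkx_{|J_i}\in B)-q_i\,\bbP(\fky_{|J_i}\in B)=p_i\left(\bbP(\fkx_{|J_i}\in B)-\bbP(\fky_{|J_i}\in B)\right)+(p_i-q_i)\,\bbP(\fky_{|J_i}\in B).
\]
Taking absolute values, summing over $i$, and using the triangle inequality splits the bound into two families of terms. For the first, $|\bbP(\fkx_{|J_i}\in B)-\bbP(\fky_{|J_i}\in B)|\leq\distTV(\fkx_{|J_i},\fky_{|J_i})$ directly from the definition of the TV distance; for the second, $\bbP(\fky_{|J_i}\in B)\leq 1$ gives $\sum_i|p_i-q_i|\,\bbP(\fky_{|J_i}\in B)\leq\sum_i|p_i-q_i|$. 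This yields the claimed inequality for the fixed $B$, and since the right-hand side no longer depends on $B$, taking the supremum over all $B\in\mcB$ concludes.

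I do not expect a serious obstacle here: once the decomposition $\bbP(\fkx\in B)=\sum_i p_i\,\bbP(\fkx_{|J_i}\in B)$ is in place, the argument is essentially bookkeeping. The only point needing a little care is the degenerate case $p_i=0$ (or $q_i=0$), where the restriction is defined by the fallback density $1/\lambda(J_i)$; one checks that the corresponding summand still vanishes on both sides, so the identity remains valid. It is also worth noting that the choice of which cross term to insert—factoring out $p_i$ rather than $q_i$—is exactly what makes the weight in the first sum equal to $\bbP(\fkx\in J_i)$, matching the statement as written.
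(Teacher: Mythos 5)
Your proposal is correct and follows essentially the same route as the paper's proof: decompose $\bbP(\fkx\in B)=\sum_i\bbP(\fkx\in J_i)\,\bbP(\fkx_{|J_i}\in B)$ over the partition, insert the cross term, apply the triangle inequality, and take the supremum over $B\in\mcB$. You merely spell out the ``elementary operations'' the paper leaves implicit, and your explicit check of the degenerate case $p_i=0$ (or $q_i=0$) is a valid handling of what the paper dismisses as ``slightly more convoluted.''
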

\begin{proof}
Assume, without loss of generality that for all $i$, $\bbP(\fkx\in J_i)$ and $\bbP(\fky\in J_i)$ are positive. Otherwise, the statement still holds, but the proof is slightly more convoluted. Fix $B\in\mcB$. By the definition of the $\fkx_{|J_i}$, $\bbP(\fkx\in B\cap J_i\mid \fkx\in J_i)=\bbP(\fkx_{|J_i}\in B)$. Therefore
\begin{equation}
    \bbP(\fkx\in B)=\sum_{i=1}^k\bbP(\fkx_{|J_i}\in B)\bbP(\fkx\in J_i),
\end{equation}
and so, after some elementary operations, we bound $|\bbP(\fkx\in B)-\bbP(\fky\in B)|$ by
\[
\sum_{i=1}^k|\bbP(\fkx_{|J_i}\in B)-\bbP(\fky_{|J_i}\in B)|\bbP(\fkx\in J_i)+\sum_{i=1}^k|\bbP(\fkx\in J_i)-\bbP(\fky\in J_i)|.
\]
Now, by maximizing over $B\in\mcB$, we conclude.
\end{proof}

The above proposition suggests the strategy of partition sampling (Algorithm~\nameref{alg:partitionsampler}). In other words, to sample $\fkx$, we only need to sample the $\fkx_{|J_i}$ and to compute the probabilities $\bbP(\fkx\in J_i)$ with enough precision for some partition $\{J_i\}_{i=1}^k$ of $I$.

\begin{algorithm}
\DontPrintSemicolon
\SetKwInOut{input}{Input}
\SetKwInOut{output}{Output}
\caption{\textsc{PartitionSampler}}\label{alg:partitionsampler}

\input{Partition $J_1,\ldots,J_k$ of $I$\\
$\tilde{p}\in\Delta^{k-1}:=\{p\in\bbR^k_\geq 0\mid \|p\|_1=1\}$\\
Approximate samplers $\textsc{S}_i$ for $\fkx_{|J_i}$}
\output{Approximate sample of $\fkx\in I$}
Sample $i\in\{1,\ldots,k\}$ with probability $\tilde{p_i}$\;
$\fkx\leftarrow \textsc{S}_i$\Comment*[r]{We use sampler $\textsc{S}_i$ to get random $\fkx\in J_i$}
Output $\fkx$\;
\end{algorithm}

We omit the formal proof that this procedure gives an efficient sampler if the $\textsc{S}_i$ are efficient samplers. To see this, we only have to note that this method's run-time will be at most the run-time of the $\textsc{S}_i$. As we can precompute the probabilities $\bbP(\fkx\in J_i)$, we can choose sufficiently good values for $\tilde{p}$ within the required restrictions for almost all cases we consider here.

\section{Inverse Transform Sampling by Bisection\label{sec:inv}}

Let $I=[a,b]\subset \mathbb{R}$ and $\varphi:I\rightarrow [0,\infty)$ be a density function. Inverse transform sampling is based on the fact that the solution $\tilde{\fkx}\in I$ of $\int_{a}^{\tilde{\fkx}}\varphi(s)\,\mathrm{d}s=\fku$, for $\fku\in[0,1]$ uniformly distributed, has density $\varphi$ (see Alg.\ref{alg:inversetransformsampler} for its pseudocode).

\begin{algorithm}
\DontPrintSemicolon
\SetKwInOut{input}{Input}
\SetKwInOut{output}{Output}
\caption{\textsc{InverseTransformSampler}}\label{alg:inversetransformsampler}

\input{$I=[a,b]$, $\varphi:I\rightarrow [0,\infty)$ such that $\int_I\varphi(t)\,\mathrm{d}t=1$}
\output{$\fkx\sim \varphi$}
Sample $\fku\in [0,1]$ uniformly\;
Find the solution $\fkx$ of $\int_{a}^\fkx\varphi(s)\,\mathrm{d}s=\fku$\;
Output $\fkx$\;
\end{algorithm}

However, any reader of this pseudocode will be suspicious about all the details swept under the rug in line 2. How do we solve $\int_{a}^\fkx\varphi(s)\,\mathrm{d}s=\fku$? And how fast can we do it? Even though this is an important question regarding the complexity of sampling, we feel that it is unaddressed by the literature, so we discuss it.

Let the cumulative distribution function corresponding to the density $\varphi$ be
\begin{equation}
    \Phi(x):=\int_a^x\,\varphi(s)\,\mathrm{d}s.
\end{equation}
We want to solve the equation
\begin{equation}\label{eq:invtransequation}
    \Phi(x)=u\in[0,1].
\end{equation}
When solving this equation, the bisection method outputs an interval containing the root. This interval is found by repeatedly subdividing the initial interval, and selecting the one such that $u - \Phi $ has different signs at the two endpoints. Let $\ell \in \mathbb{N}$; we stop subdividing after $\ell$ iterations. In the end, we choose a point at random in the final interval.
We integrate the bisection method in inverse transform sampling in the algorithm \nameref{alg:bisectionsampler}.

\begin{algorithm}
\DontPrintSemicolon
\SetKwInOut{input}{Input}
\SetKwInOut{output}{Output}
\caption{\textsc{BisectionSampler}}\label{alg:bisectionsampler}
\input{$\varphi:[a,b]\rightarrow [0,\infty)$ such that $\int_0^1\varphi(t)\,\mathrm{d}t=1$\\
$\ell\in\bbN$}
\output{Approximate sample $\fkx$ for $\varphi$}
Sample $\fku\in [0,1]$ uniformly, $i\gets 0$\;
$x_l\gets a$, $v_l\gets \mathrm{sign}\left(\fku -\Phi(x_l)\right)$, 
$x_r\gets b$, $v_r\gets \mathrm{sign}\left(\fku -\Phi(x_r)\right)$\;
\While{$i<\ell$}{
$x_m\gets (x_l+x_r)/2$, $v_m\gets \mathrm{sign}\left(\fku -\Phi(x_m)\right)$\;
\eIf{$v_l=v_m$}{
$x_r\gets x_m$, 
$v_r\gets v_m$, 
$i\gets i+1$\;
}{
$x_l\gets x_m$, 
$v_l\gets v_m$, 
$i\gets i+1$\;
}
}
Sample $\tilde{\fkx}\in [x_l,x_r]$ uniformly\;
Output $\tilde{\fkx}$\;
\end{algorithm}

The following theorem shows that \nameref{alg:bisectionsampler} produces a nice sampler.

\begin{theo}\label{theo:bisection}
Let $\varphi:[a,b]\rightarrow [0,\infty)$, $\fkx\sim\varphi$ and $\tilde{\fkx}_\ell$ the output of \nameref{alg:bisectionsampler} for $\ell\in\bbN$. Then
\begin{equation}
    \distTV(\fkx,\tilde{\fkx}_\ell)\leq 2^{-\ell}|b-a|\max_{x\in[a,b]}|\varphi'(x)|.
\end{equation}
\end{theo}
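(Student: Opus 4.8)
The plan is to exploit the special structure that $\ell$ steps of bisection impose on $[a,b]$: they split it into the $2^\ell$ dyadic subintervals $J_1,\dots,J_{2^\ell}$ of equal length $\lambda(J_i)=|b-a|/2^\ell$, and I would apply Proposition~\ref{prop:partitionsample} to this partition. The crucial point is that the two sums appearing there are, respectively, trivial and elementary. Since $\varphi\ge 0$, the cumulative function $\Phi$ is nondecreasing, so on input $\fku$ the bisection converges to the dyadic interval $J_i$ containing the solution of $\Phi(x)=\fku$; equivalently, the final interval equals $J_i$ exactly when $\fku\in[\Phi(\inf J_i),\Phi(\sup J_i)]$. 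As $\fku$ is uniform on $[0,1]$ and $\Phi(b)-\Phi(a)=1$, this event has probability $\Phi(\sup J_i)-\Phi(\inf J_i)=\int_{J_i}\varphi=\bbP(\fkx\in J_i)$. Hence $\bbP(\tilde{\fkx}_\ell\in J_i)=\bbP(\fkx\in J_i)$ for every $i$, and the second sum in Proposition~\ref{prop:partitionsample} vanishes.

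It then remains to bound the first sum $\sum_i\bbP(\fkx\in J_i)\,\distTV\bigl(\fkx_{|J_i},(\tilde{\fkx}_\ell)_{|J_i}\bigr)$. Conditioned on the bisection returning $J_i$, the last line of \nameref{alg:bisectionsampler} outputs a point drawn uniformly from $J_i$, so $(\tilde{\fkx}_\ell)_{|J_i}$ has constant density $1/\lambda(J_i)$, whereas $\fkx_{|J_i}$ has density $\varphi/\int_{J_i}\varphi$. Applying the $L^1$ bound of Proposition~\ref{prop:TVL1} on $J_i$ and multiplying by $\bbP(\fkx\in J_i)=\int_{J_i}\varphi$ gives
\[
\bbP(\fkx\in J_i)\,\distTV\bigl(\fkx_{|J_i},(\tilde{\fkx}_\ell)_{|J_i}\bigr)\le\int_{J_i}\Bigl|\varphi(x)-\tfrac{1}{\lambda(J_i)}\int_{J_i}\varphi\Bigr|\,\mathrm{d}x .
\]
By the mean value theorem for integrals the constant $\tfrac{1}{\lambda(J_i)}\int_{J_i}\varphi$ equals $\varphi(\xi_i)$ for some $\xi_i\in J_i$, and then the mean value theorem applied to $\varphi$ yields $|\varphi(x)-\varphi(\xi_i)|\le M\,|x-\xi_i|\le M\,\lambda(J_i)$ for $x\in J_i$, with $M:=\max_{[a,b]}|\varphi'|$. (Equivalently and more directly, one recognizes that $\tilde{\fkx}_\ell$ has the piecewise-constant density whose value on $J_i$ is this average of $\varphi$, and applies Proposition~\ref{prop:TVL1} once, globally.)

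Summing over the $2^\ell$ intervals gives
\[
\distTV(\fkx,\tilde{\fkx}_\ell)\le\sum_{i=1}^{2^\ell}M\,\lambda(J_i)^2=M\,2^\ell\Bigl(\tfrac{|b-a|}{2^\ell}\Bigr)^2=2^{-\ell}|b-a|^2\,M ,
\]
which is the asserted decay in $\ell$; a dimensional check, recalling that a density scales like an inverse length, confirms the square on $|b-a|$. I expect the only genuine obstacle to be the probabilistic bookkeeping of the first paragraph, namely arguing cleanly that $\bbP(\tilde{\fkx}_\ell\in J_i)$ coincides \emph{exactly} with $\int_{J_i}\varphi$ and disposing of the measure-zero degeneracies: values of $\fku$ landing on a boundary image $\Phi(\sup J_i)$, and flat stretches where $\varphi=0$ that make $\Phi$ only weakly monotone, so that the solution of $\Phi(x)=\fku$ must be read through a generalized inverse. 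Everything after that identification is the routine $L^1$/mean-value estimate above.
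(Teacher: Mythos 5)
Your argument is, in essence, the paper's own proof: the same dyadic partition of $[a,b]$ into $2^\ell$ equal subintervals, the same reformulation of the sampler as ``pick $J_i$ with probability $\bbP(\fkx\in J_i)$, then draw uniformly from $J_i$'' so that the second sum in Proposition~\ref{prop:partitionsample} vanishes, and the same $L^1$/mean-value estimate via Proposition~\ref{prop:TVL1} on each piece; you additionally supply the measure-zero bookkeeping (ties at the values $\Phi(\sup J_i)$, flat stretches where $\Phi$ is only weakly monotone) that the paper merely asserts, which is a genuine improvement in rigor. The one divergence is in your favour: your final bound $2^{-\ell}|b-a|^2\max_{x\in[a,b]}|\varphi'(x)|$ is the correct one, and your dimensional check is sound. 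The paper's proof drops a length factor at the last step: after reducing to $\bbP(\fkx\in J_k)^{-1}\|\varphi_{|J_k}-\varphi(\xi_k)\|_1$, the pointwise bound $|\varphi(s)-\varphi(\xi_k)|\leq b2^{-\ell}\max_{x\in J_k}|\varphi'(x)|$ must still be multiplied by $\lambda(J_k)=b2^{-\ell}$ to bound the $L^1$ norm, giving $b^2 2^{-2\ell}\max_{x\in J_k}|\varphi'(x)|$ rather than the printed $b2^{-2\ell}\max_{x\in J_k}|\varphi'(x)|$; summed over the $2^\ell$ intervals this produces $|b-a|^2$, not $|b-a|$. The statement as printed is in fact false: for the linear density $\varphi(x)=2x/b^2$ on $[0,b]$ one computes $\distTV(\fkx,\tilde{\fkx}_\ell)=2^{-\ell-2}$ exactly, while the claimed bound is $2^{1-\ell}/b$, which is violated as soon as $b>8$. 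The slip is harmless downstream, since the theorem is only invoked with $I=[-1,1]$, where the correction replaces a factor $|b-a|=2$ by $4$ and thus only shifts the additive constant in the choice of $\ell_B$ in the main algorithm.
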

\begin{proof}
Without loss of generality, assume that $a=0$. For $k\in\{0,\ldots,2^{\ell}-1\}$, let $J_k:=[bk/2^\ell,b(k+1)/2^\ell]$. The $\tilde{\fkx}_\ell$ produced by \nameref{alg:bisectionsampler} can also be produced as follows:
\begin{enumerate}
    \item Choose the interval $J_k$ at random with probability $\bbP(\fkx\in J_k)$. Note that the solution of~\eqref{eq:invtransequation} lies in $J_k$ with probability $\bbP(\fkx\in J_k)$.
    \item Sample $\fku_k\in J_k$ uniformly.
\end{enumerate}
Therefore, by Proposition~\ref{prop:partitionsample}, to compute $\distTV(\fkx,\tilde{\fkx}_\ell)$, we only have compute the TV distance between $\fkx_{|J_k}$ and the uniformly distributed $\fku_k\in J_k$. Now, after an elementary computation and Proposition~\ref{prop:TVL1},
\begin{equation}
    \distTV(\fkx_{|J_k},\fku_k) \le \bbP(\fkx\in J_k)^{-1}\|\varphi_{|J_k}-2^\ell\bbP(\fkx\in J_k)\|_1. 
\end{equation}
Now, by the mean value theorem, $\bbP(\fkx\in J_k)=b\varphi(\xi_k)/2^\ell$ for some $\xi_k\in J_k$, so we obtain
\begin{equation}
    \distTV(\fkx_{|J_k},\fku_k) \le \bbP(\fkx\in J_k)^{-1}\|\varphi_{|J_k}-\varphi(\xi_k)\|_1. 
\end{equation}
Now, $|\varphi(s)-\varphi(\xi_k)|\leq b2^{-\ell}\max_{x\in J_k}|\varphi'(x)|$, by the mean value theorem, and so $\distTV(\fkx_{|J_k},\fku_k)\leq \bbP(\fkx\in J_k)^{-1}b2^{-2\ell}\max_{x\in J_k}|\varphi'(x)|$, concluding the proof.
\end{proof}

Note that \nameref{alg:bisectionsampler} needs to perform a minimum of
\[
\max\left\{0,\sup\{\log |\varphi'(x)|\mid x\in [a,b]\}\right\}
\]
iterations. Moreover, note that in the bisection method, we can interchange precomputation and computation with no effect to our notion of the efficient sampler.

\section[Sampling points on a curve]{Sampling points from a curve \label{sec:sampling}}

Let $\gamma:I:=[-1,1] \rightarrow \mathbb{R}^n$ be the parametrization of a real polynomial curve of degree $d$---after a linear change of coordinates we can always assume that $I=[-1,1]$. Since we want to generate random points $\fkx\in\gamma(I)$ uniformly with respect to the arc-length, we only need to sample a random parameter $\fkt\in I$ distributed according to the normalized speed
\begin{equation}\label{eq:normspeed}
    \varphi(t):=\left(\int_{-1}^1\|\gamma'(s)\|_2\mathrm{d}s\right)^{-1}\|\gamma'(t)\|_2
\end{equation}
and then take the random variable $\gamma(\fkt)\in\gamma(I)$ which will have the desired distribution.

When $n\geq 2$, we have that $\varphi$ is not a polynomial. Because of this, to perform the inverse transform sampling, even by bisection, we will approximate $\varphi$ by a Chebyshev approximation $\tilde\varphi$ for which computing
\[
\tilde\Phi(t):=\int_{-1}^t\tilde\varphi(s)\,\mathrm{d}s
\]
is a lot easier than computing
\[
\Phi(t):=\int_{-1}^t\varphi(s)\,\mathrm{d}s.
\]
Now, to make the Chebyshev approximation faster, we will split the interval into subintervals.

First, we review the Chebyshev approximation; then, we apply it to the case of interest; finally, we show how splitting accelerates the Chebyshev approximation. The algorithm appears in Algorithm~\ref{alg:newsampler}. Again, we observe that this algorithm is very similar to the one proposed in~\cite{Olver13}, but our main contribution is not the sampler itself but the error analysis using the total variation distance.

\subsection{Chebyshev approximations\label{sec:chebyshev}}

We follow mainly~\cite{mason_chebyshev_2003} and~\cite{Trefethen2013}. Recall that the \emph{$k$th Chebyshev polynomial} is the polynomial given by
\begin{equation}
      \chev_k(X) =\sum_{i=0}^{k}\binom{k}{2i}(1-X^2)^iX^{k-2i}, 
\end{equation}
where $\chev$ is the first initial of Chebyshev in the Cyrillic script. Alternatively, note that $\chev_k$ satisfies
\begin{equation}
      \chev_k(x) =\cos(k\arccos(x)). 
\end{equation}
for $x\in I$. As a consequence, the $k$ zeros of $\chev_k$ are given by
\begin{equation}
    \zeta_{a,k}:=\cos\left(\frac{(1+2a)\pi}{2k}\right)
\end{equation}
with $a\in \{0,\ldots,k-1\}$.

\subsubsection{Chebyshev interpolation}

The \emph{$k$th Chebyshev interpolant} of $f:I\rightarrow\bbR$ is the unique degree $k$ polynomial $\chevI_k(f)$ satisfying for $a\in\{0,\ldots,k\}$,
\begin{equation}
    \chevI_k(f)(\zeta_{a,k+1})=f(\zeta_{a,k+1}).
\end{equation}
To compute $\chevI_k(f)$, there is no need to solve the system above thanks to the following proposition.

\begin{prop}\label{prop:interpolationcomputation}\cite[Thm.~6.7]{mason_chebyshev_2003}
Let 
\[
\chevI_k(f)=\frac{c_0}{2} + \sum_{a=1}^k c_a \chev_a
\]
be the $k$th Chebyshev interpolant of $f:I\rightarrow\bbR$. Then
\begin{equation*}\tag*{\qed}
    c_a = \frac{2}{k+1} \sum_{i=0}^{k}f(\zeta_{i,k}) \chev_a(\zeta_{i,k}).
\end{equation*}
\end{prop}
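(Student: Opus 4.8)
The plan is to reduce the statement to the \emph{discrete orthogonality} of the Chebyshev polynomials at the interpolation nodes. The key simplification is the relation $\chev_a(\zeta_{i,m})=\cos(a\theta_i)$ with $\theta_i:=\tfrac{(2i+1)\pi}{2m}$, which turns every product and sum of Chebyshev values into a sum of cosines sampled at equally spaced angles, and such sums admit a closed form. Throughout, let $m$ denote the number of interpolation nodes; for the $k$th interpolant these are the $m=k+1$ zeros of $\chev_{k+1}$, which is what produces the normalization $\tfrac{2}{m}=\tfrac{2}{k+1}$.

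First I would establish the discrete orthogonality relation
\[
\sum_{i=0}^{m-1}\chev_a(\zeta_{i,m})\,\chev_b(\zeta_{i,m})=
\begin{cases}
0 & a\neq b,\\
m/2 & a=b\neq 0,\\
m & a=b=0,
\end{cases}
\]
for $0\le a,b<m$. Writing $\chev_a(\zeta_{i,m})\chev_b(\zeta_{i,m})=\tfrac12\big(\cos((a+b)\theta_i)+\cos((a-b)\theta_i)\big)$ and using the elementary summation $\sum_{i=0}^{m-1}\cos\big((2i+1)\phi\big)=\tfrac{\sin(2m\phi)}{2\sin\phi}$ (a geometric-series / Dirichlet-kernel computation), the off-diagonal contributions vanish because $\sin((a\pm b)\pi)=0$, whereas on the diagonal the term $\cos((a-b)\theta_i)=1$ sums to $m$.

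Second, I would substitute the defining interpolation conditions into the ansatz. By construction $f(\zeta_{i,m})=\chevI_k(f)(\zeta_{i,m})=\tfrac{c_0}{2}+\sum_{b=1}^{k}c_b\,\chev_b(\zeta_{i,m})$ at every node. Multiplying by $\chev_a(\zeta_{i,m})$ and summing over $i$, the orthogonality relation annihilates every term on the right except $b=a$, leaving $\sum_i f(\zeta_{i,m})\chev_a(\zeta_{i,m})$ equal to $c_a$ times the diagonal constant. Solving for $c_a$ gives the claimed formula; the convention of writing $c_0/2$ instead of $c_0$ is precisely the device that makes the normalization $\tfrac{2}{m}$ uniform over all $a$, including $a=0$.

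The main obstacle is careful bookkeeping in the orthogonality computation rather than any deep idea: one must isolate the aliasing cases where $a\pm b$ is a nonzero multiple of $2m$ (there $\sin\phi$ vanishes and the closed form must be read as a limit, giving a nonzero ``diagonal'' contribution), and check that for a degree-$k$ interpolant with $m=k+1$ nodes these degenerate cases do not interfere, so that the interpolation matrix is invertible and $\chevI_k(f)$ is well defined. Treating the constant mode $a=0$ and the top mode separately, and confirming that the node count matches the number of coefficients, is exactly where the indexing has to be handled with attention.
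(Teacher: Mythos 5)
Your proof is correct and is precisely the standard discrete-orthogonality argument behind the cited result \cite[Thm.~6.7]{mason_chebyshev_2003}; since the paper states the proposition without proof (it is quoted with its \qed from the reference), there is no divergence to report, and your handling of the aliasing cases $a\pm b\equiv 0 \pmod{2m}$ correctly confirms they cannot occur for $0\le a,b\le m-1$ with $m=k+1$ nodes. Note that your write-up also silently repairs an indexing slip in the printed statement: with $m=k+1$ interpolation nodes the sum should run over the zeros $\zeta_{i,k+1}$, $i=0,\dots,k$, of $\chev_{k+1}$ (consistent with the normalization $\tfrac{2}{k+1}$ and with the defining conditions of $\chevI_k(f)$), not over $\zeta_{i,k}$, $i=0,\dots,k-1$, as written.
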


Let us remind that $\chevI_k(f)$ is not equivalent to truncating the Chebyshev series up to degree $k$~\cite[Ch.~4]{Trefethen2013}.

\subsubsection{Evaluation of Chebyshev interpolants}

Given a Chebyshev interpolant $\chevI(f)$, we can expand it in the monomial basis and then evaluate it using, for example, Ruffini-Horner's method. However, we have a version of Ruffini-Horner's method that works directly for Chebyshev expansions of a polynomial.

\begin{prop}\cite[pp.~55-56]{FoxParker}\label{prop:evalchebyshev}
Let $p=\sum_{a=0}^kc_a\chev_k$, then for every $x\in\bbR$,
\begin{equation}
p(x) = \frac{1}{2} (\bcomp_0(x)-\bcomp_2(x))  
\end{equation}
where $\bcomp_0(x)$ and $\bcomp_2(x)$ are computed through the following back\-wards-recursive relation 
\begin{equation*}\tag*{\qed}
\begin{cases}
        \bcomp_{k+1}(x) &= \bcomp_{k+2}(x)=0\\
        \bcomp_a (x) &= 2x\bcomp_{a+1}(x)-\bcomp_{a+2}(x)+c_a.

\end{cases}   
\end{equation*}
\end{prop}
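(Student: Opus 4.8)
The plan is to prove this Clenshaw-type identity by a summation-by-parts argument that exploits the three-term recurrence of the Chebyshev polynomials. Two elementary facts drive the computation. First, from $\chev_a(\cos\theta)=\cos(a\theta)$ one reads off the recurrence $\chev_a(x)=2x\chev_{a-1}(x)-\chev_{a-2}(x)$ for $a\ge 2$, together with the base cases $\chev_0(x)=1$ and $\chev_1(x)=x$. Second, the backward recurrence defining the $\bcomp_a$ can be rearranged into the forward identity $c_a=\bcomp_a(x)-2x\bcomp_{a+1}(x)+\bcomp_{a+2}(x)$, valid for $0\le a\le k$. I read the expansion of $p$ with the standard Chebyshev convention in which the constant term is halved, that is, $p=\tfrac{c_0}{2}+\sum_{a=1}^k c_a\chev_a$, consistent with Proposition~\ref{prop:interpolationcomputation}.

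The core step is to substitute the expression $c_a=\bcomp_a-2x\bcomp_{a+1}+\bcomp_{a+2}$ into $\sum_{a=0}^k c_a\chev_a$, producing three sums $\sum_a\bcomp_a\chev_a$, $-2x\sum_a\bcomp_{a+1}\chev_a$, and $\sum_a\bcomp_{a+2}\chev_a$. After reindexing the last two and discarding the out-of-range terms using $\bcomp_{k+1}=\bcomp_{k+2}=0$, I collect the coefficient of each $\bcomp_a$. For every $a\ge 2$ this coefficient equals $\chev_a-2x\chev_{a-1}+\chev_{a-2}$, which vanishes by the Chebyshev recurrence: this is where the whole sum telescopes away. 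The only contributions that survive are those of $\bcomp_0$ and $\bcomp_1$; evaluating them with $\chev_0=1$ and $\chev_1=x$ gives $\sum_{a=0}^k c_a\chev_a=\bcomp_0(x)-x\bcomp_1(x)$.

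To finish, I would convert this to the symmetric form. Using the definition $\bcomp_0=2x\bcomp_1-\bcomp_2+c_0$ to eliminate $x\bcomp_1$, and subtracting the $\tfrac{c_0}{2}$ demanded by the halved-constant convention, the expression collapses to $\tfrac12(\bcomp_0(x)-\bcomp_2(x))$, as claimed. I expect no analytic difficulty here: the entire content is the three-term recurrence plus a bookkeeping rearrangement. The only places that require care are purely combinatorial, namely keeping track of the boundary terms $a=0$ and $a=1$ after the reindexing, since these are exactly the terms that do not telescope, and honoring the $c_0/2$ convention, which is precisely what turns the naive answer $\bcomp_0-x\bcomp_1$ into the symmetric $\tfrac12(\bcomp_0-\bcomp_2)$. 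An alternative would be a short induction on $k$, but the summation-by-parts route is cleaner and makes the role of the Chebyshev recurrence transparent.
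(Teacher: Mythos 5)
Your proof is correct, and it is the standard summation-by-parts derivation of Clenshaw's algorithm; the paper itself offers no proof, citing Fox--Parker, where essentially this same argument appears, so there is nothing to diverge from. One point you handled well and should keep explicit: as printed, the proposition writes $p=\sum_{a=0}^k c_a\chev_a$ without halving the constant term (and with a misprint, $\chev_k$ for $\chev_a$), and for that unhalved sum your telescoping correctly yields $\bcomp_0(x)-x\bcomp_1(x)=\tfrac{1}{2}\bigl(\bcomp_0(x)-\bcomp_2(x)\bigr)+\tfrac{c_0}{2}$, which differs from the stated formula by $c_0/2$. Your decision to read the statement under the halved-constant convention $p=\tfrac{c_0}{2}+\sum_{a=1}^k c_a\chev_a$, consistent with Propositions~\ref{prop:interpolationcomputation} and~\ref{prop:interpolationintegrals}, is exactly what makes the identity $p(x)=\tfrac{1}{2}\bigl(\bcomp_0(x)-\bcomp_2(x)\bigr)$ true, so your argument simultaneously proves the result and repairs the statement's convention slip; all the individual steps (the three-term recurrence $\chev_a=2x\chev_{a-1}-\chev_{a-2}$, the rearrangement $c_a=\bcomp_a-2x\bcomp_{a+1}+\bcomp_{a+2}$, the vanishing of the boundary terms via $\bcomp_{k+1}=\bcomp_{k+2}=0$, and the final elimination of $x\bcomp_1$) check out.
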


\subsubsection{Speed of convergence}

To estimate the error of the Chebyshev approximation, we consider the so-called \emph{Bernstein ellipse} given by
\[
E_\rho := \{z\in \mathbb{C} \mid |z+\sqrt{z^2-1}| = \rho \},
\]
where $\rho> 1$. Note that $E_\rho$ is the ellipse with foci $-1$ and $1$ and focal distance $\rho+\rho^{-1}$. We have the following theorem.

\begin{theo}\label{theo:approx_error}\cite[Theorem~8.2]{Trefethen2013}.
If $f$ is analytic on the elliptic disc given by $E_{\rho}$, then
\begin{equation}
    \norm{f-\chevI_k(f)}_\infty \le \frac{4\|f\|_{E_\rho}{\rho}^{-k}}{\rho-1},
\end{equation}
where $\|f\|_{E_\rho}: = \max_{z\in E_{\rho}} |f(z)|$. \eproof
\end{theo}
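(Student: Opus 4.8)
The plan is to reduce the pointwise interpolation error to the decay of the Chebyshev series coefficients of $f$, and then to control the gap between the interpolation coefficients and the series coefficients through the \emph{aliasing} phenomenon. Write the Chebyshev series of $f$ on $I$ as $f=\sum_{m=0}^\infty a_m\chev_m$, where $a_m=\frac{2}{\pi}\int_0^\pi f(\cos\theta)\cos(m\theta)\,\mathrm{d}\theta$ for $m\ge 1$ (with the usual $\tfrac1\pi$ normalisation for $a_0$). Since $|\chev_m|\le 1$ on $I$, every bound below will rest on estimating the tail $\sum_{m>k}|a_m|$.

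First I would bound $|a_m|$ using analyticity. Under the Joukowski substitution $x=(z+z^{-1})/2$, the function $F(z):=f\!\left((z+z^{-1})/2\right)$ is analytic on the annulus $\rho^{-1}<|z|<\rho$, because the circle $|z|=r$ is mapped onto the Bernstein ellipse $E_r$, which lies inside the region bounded by $E_\rho$ whenever $r<\rho$. Since $\chev_m(\cos\theta)=\cos(m\theta)=\tfrac12(z^m+z^{-m})$ with $z=e^{i\theta}$, the number $\tfrac12 a_m$ is exactly the $z^m$-Laurent coefficient of $F$. Cauchy's estimate on $|z|=r$ then gives $|a_m|\le 2\,\|f\|_{E_r}\,r^{-m}\le 2\,\|f\|_{E_\rho}\,r^{-m}$, and letting $r\to\rho$ yields the decay bound
\[
|a_m|\le 2\,\|f\|_{E_\rho}\,\rho^{-m}.
\]

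Next I would relate the coefficients $c_j$ of $\chevI_k(f)=\sum_{j=0}^k c_j\chev_j$ to the $a_m$. Because $\chevI_k(f)$ interpolates $f$ at the $k+1$ nodes $\zeta_{a,k+1}$—the roots of $\chev_{k+1}$—the two functions agree there, so $c_j$ arises by folding the infinite series onto degrees $0,\dots,k$. This folding is governed by the identity $\chev_m(\zeta_{a,k+1})=\pm\,\chev_j(\zeta_{a,k+1})$ whenever $m\equiv\pm j\pmod{2(k+1)}$ (and $\chev_m$ vanishes at every node when $m$ is an odd multiple of $k+1$). Hence $c_j=a_j+\sum a_m$, the sum ranging, up to signs, over the tail indices $m>k$ that alias to $j$. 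The decisive bookkeeping step is that, as $j$ runs over $\{0,\dots,k\}$, these aliasing classes are disjoint and each tail index is charged at most once, so that
\[
\sum_{j=0}^k|c_j-a_j|\le\sum_{m>k}|a_m|.
\]

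Finally I would assemble the pieces. Using $|\chev_j|\le 1$ on $I$ and the triangle inequality on
\[
f-\chevI_k(f)=\sum_{m>k}a_m\chev_m-\sum_{j=0}^k(c_j-a_j)\chev_j,
\]
we obtain $\norm{f-\chevI_k(f)}_\infty\le\sum_{m>k}|a_m|+\sum_{j=0}^k|c_j-a_j|\le 2\sum_{m>k}|a_m|$. Inserting the decay bound and summing the geometric series $\sum_{m>k}\rho^{-m}=\rho^{-k}/(\rho-1)$ produces $\norm{f-\chevI_k(f)}_\infty\le 4\,\|f\|_{E_\rho}\,\rho^{-k}/(\rho-1)$, as claimed. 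The main obstacle is the aliasing step: one has to pin down the modular-plus-reflection structure of $\chev_m$ on the nodes $\zeta_{a,k+1}$ and verify the disjointness of the aliasing classes, since it is precisely this combinatorics that yields the clean factor $2$; the coefficient decay and the final geometric summation are then routine.
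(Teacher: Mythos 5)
Your proposal is correct and follows essentially the same route as the paper's source for this statement (the paper cites it to Trefethen without proof): coefficient decay $|a_m|\le 2\|f\|_{E_\rho}\rho^{-m}$ via the Joukowski substitution and Cauchy estimates on the annulus, combined with the aliasing bound $\sum_{j=0}^{k}|c_j-a_j|\le\sum_{m>k}|a_m|$ and the geometric series $\sum_{m>k}\rho^{-m}=\rho^{-k}/(\rho-1)$, which is exactly how the factor $4$ arises in the cited Theorem~8.2. The only point worth noting is that you correctly worked out the aliasing combinatorics for the first-kind nodes $\zeta_{a,k+1}$ (zeros of $\chev_{k+1}$) used in this paper, including the uniform sign $(-1)^q$ for $m=2(k+1)q\pm j$ and the vanishing of $\chev_m$ on the grid when $m$ is an odd multiple of $k+1$, whereas Trefethen states the theorem for second-kind (Chebyshev--Lobatto) points; the bound is the same in both cases.
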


To obtain theoretical bounds of $\|f\|_{E_\rho}$, the following inequality by Bernstein will make our job easier.

\begin{theo}\label{theo:bernsteinineq}\cite[Exercise~8.6]{Trefethen2013}
Let $f$ be a polynomial of degree $d$ and $\rho>1$. Then
\begin{equation*}\tag*{\qed}
    \|f\|_{E_\rho}\leq \rho^d\|f\|_\infty.
\end{equation*}
\end{theo}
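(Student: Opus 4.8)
The plan is to linearize the geometry of the Bernstein ellipse through the Joukowski map and then reduce the bound to a classical maximum-modulus estimate for ordinary polynomials. First I would parametrize $E_\rho$ by writing $z=\tfrac12(w+w^{-1})$; a direct computation gives $z^2-1=\tfrac14(w-w^{-1})^2$, so that $z+\sqrt{z^2-1}=w$ for the appropriate branch, and hence the defining condition $|z+\sqrt{z^2-1}|=\rho$ of $E_\rho$ is exactly $|w|=\rho$. Under the same map the segment $[-1,1]$ is the image of the unit circle $|w|=1$, since $w=e^{i\theta}$ yields $z=\cos\theta$.

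Next I would transport $f$ through this map. Setting $g(w):=f\!\left(\tfrac12(w+w^{-1})\right)$ turns $f$ into a Laurent polynomial whose powers of $w$ range over $\{-d,\dots,d\}$ (the top term $z^d$ contributes only $w^{\pm d}$). Multiplying by $w^d$ clears the negative powers, so $P(w):=w^{d}g(w)$ is an ordinary polynomial of degree at most $2d$. On the unit circle the associated $z$ lies in $[-1,1]$ and $|w^d|=1$, so there $|P(w)|=|f(z)|\le\|f\|_\infty$.

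The heart of the argument is then a reciprocal-polynomial estimate: if $P$ has degree at most $m$ and $|P|\le M$ on the unit circle, then $|P(w)|\le M|w|^{m}$ for all $|w|\ge1$. To prove this I would introduce the reverse polynomial $R(w):=w^{m}P(1/w)$, which is again of degree $\le m$ and satisfies $|R(w)|=|P(1/w)|\le M$ on $|w|=1$; the maximum-modulus principle extends this to $|R|\le M$ on the closed unit disk, and since $P(w)=w^{m}R(1/w)$ one obtains $|P(w)|\le M|w|^{m}$ whenever $|w|\ge1$.

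Applying this with $m=2d$ and $M=\|f\|_\infty$, for any $z\in E_\rho$ with associated $|w|=\rho>1$ I would conclude
\[
|f(z)|=\frac{|P(w)|}{|w|^{d}}\le\frac{\|f\|_\infty\,\rho^{2d}}{\rho^{d}}=\rho^{d}\,\|f\|_\infty,
\]
and taking the maximum over $z\in E_\rho$ gives the claim. I expect the main delicacy to be the bookkeeping around the Joukowski map: verifying that $\{\tfrac12(w+w^{-1}):|w|=\rho\}$ is precisely $E_\rho$ (the two preimages $w,1/w$ of a given $z$ have reciprocal moduli, so exactly one has modulus $\rho$), and that $g$ really is supported in $\{-d,\dots,d\}$ so that $\deg P\le 2d$. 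The reciprocal-polynomial lemma itself is standard, so the only genuine work lies in setting up these correspondences correctly.
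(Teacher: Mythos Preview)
Your argument is correct and is essentially the standard proof of Bernstein's inequality via the Joukowski map. Note, however, that the paper does not supply its own proof of this theorem: it is quoted as a known result (an exercise in Trefethen's book), with the \qed\ symbol placed directly in the statement. So there is no proof in the paper to compare against; your proposal simply fills in the omitted details, and does so correctly.
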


Note that this inequality does not serve to approximate polynomials by a Chebyshev interpolant of a lower degree. However, we will use it to control the quantity of interest for $\varphi$, i.e., the normalized speed of $\gamma$~\eqref{eq:normspeed}. The hard part will be estimating a sufficiently small $\rho$ so that $\varphi$ admits an analytic extension to the interior of $E_\rho$.

\subsubsection{Integration formulas}

Imagine we want to compute the integral (definite or indefinite) of $\chevI_k(f)$. To do this, we use the following proposition.

\begin{prop}\label{prop:interpolationintegrals}\cite[pp.~54-55]{FoxParker}\cite[pp.~45-46, 59]{mason_chebyshev_2003}
Let 
\[
p=\frac{c_0}{2} + \sum_{a=1}^k c_a \chev_a.
\]
Then
\[
\sum_{a=1}^{k+1}pc_a\chev_a
\]
with
\[
 pc_a = 
 \begin{cases}
     \frac{c_{a-1}-c_{a+1}}{2a}, &\text{if }a=1,\dots, k-1\\
   \frac{c_{k-1}}{2k}, &\text{if }a=k\\
    \frac{c_{k}}{2(k+1)}, &\text{if }a=k+1
 \end{cases}
\]
is a primitive function of $p$.

Moreover,
\begin{equation*}\tag*{\qed}
    \int_{-1}^1 p(x)\,\mathrm{d}x = c_0 - \sum_{a=2}^n\left(\frac{1+(-1)^a}{a^2-1}\right) c_a.
\end{equation*}
\end{prop}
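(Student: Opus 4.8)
The plan is to treat the two assertions separately, since both reduce to a single classical integration identity for Chebyshev polynomials. The cornerstone I would establish first is that, for $a\geq 2$,
\[
\int \chev_a(x)\,\mathrm{d}x=\frac{1}{2}\left(\frac{\chev_{a+1}(x)}{a+1}-\frac{\chev_{a-1}(x)}{a-1}\right)+\mathrm{const}.
\]
The quickest route is the substitution $x=\cos\theta$, under which $\chev_a(x)=\cos(a\theta)$: differentiating the right-hand side and using $\chev_a'(\cos\theta)=a\sin(a\theta)/\sin\theta$ together with $\sin((a+1)\theta)-\sin((a-1)\theta)=2\cos(a\theta)\sin\theta$ collapses everything to $2\chev_a$. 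The two low-order cases are handled by hand: since $\chev_0=1$ and $\chev_1=x$, one has $\int\chev_0=\chev_1$ and $\int\chev_1=\tfrac14\chev_2$ up to constants (the general formula degenerates at $a=1$, so it must be excluded there).

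With this identity in place, I would integrate $p=\tfrac{c_0}{2}\chev_0+c_1\chev_1+\sum_{a=2}^k c_a\chev_a$ term by term and collect the coefficient of each $\chev_b$ in the result. For $a\geq 2$, the term $c_a\chev_a$ contributes $\tfrac{c_a}{2(a+1)}$ to the coefficient of $\chev_{a+1}$ and $-\tfrac{c_a}{2(a-1)}$ to that of $\chev_{a-1}$; adding the contributions of $\tfrac{c_0}{2}\chev_0$ and $c_1\chev_1$ to the low-order coefficients, the coefficient of $\chev_b$ comes out to $(c_{b-1}-c_{b+1})/(2b)$ under the convention that $c_j=0$ for $j>k$. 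This convention is exactly what produces the three displayed cases: for $b\leq k-1$ both $c_{b-1}$ and $c_{b+1}$ survive; for $b=k$ the coefficient $c_{k+1}$ vanishes, leaving $c_{k-1}/(2k)$; and for $b=k+1$ only $c_k$ survives, leaving $c_k/(2(k+1))$. This establishes that $\sum_{a=1}^{k+1}pc_a\chev_a$ is a primitive of $p$.

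For the definite integral I would compute $\int_{-1}^1\chev_a(x)\,\mathrm{d}x$ directly, again via $x=\cos\theta$, obtaining $\int_0^\pi\cos(a\theta)\sin\theta\,\mathrm{d}\theta=\tfrac{1+(-1)^a}{1-a^2}$ for $a\neq 1$, while the $a=1$ term vanishes by oddness. Substituting into $\int_{-1}^1 p=c_0+\sum_{a=2}^k c_a\int_{-1}^1\chev_a$ and flipping the sign of the denominator gives precisely $c_0-\sum_{a=2}^k\tfrac{1+(-1)^a}{a^2-1}c_a$; note that the upper limit $n$ in the displayed statement should read $k$, as $n$ is the ambient dimension and $p$ has degree $k$.

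None of this is deep. The only point requiring genuine care is the boundary bookkeeping in the second paragraph: tracking which $\chev_b$ picks up contributions from which $c_a$ near the top degree, and in particular noticing that the $\chev_2$-coefficient draws on the exceptional low-order term $\int\chev_1=\tfrac14\chev_2$ rather than on the generic rule. Recognizing that the piecewise formula for $pc_a$ is simply the single rule $(c_{a-1}-c_{a+1})/(2a)$ with out-of-range coefficients set to zero is what makes the collection clean. Since the result is quoted verbatim from \cite{FoxParker} and \cite{mason_chebyshev_2003}, one could also just cite it, but the computation above makes it self-contained.
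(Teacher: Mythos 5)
Your proof is correct. Note, however, that the paper offers no proof of this proposition at all: the statement is quoted directly from the cited references (Fox--Parker, pp.~54--55, and Mason--Handscomb, pp.~45--46, 59), with the \qed\ attached to the statement itself. So there is no argument in the paper to compare against; what you have done is reconstruct the standard textbook derivation, and you have done so accurately. The antiderivative identity $\int\chev_a(x)\,\mathrm{d}x=\frac{1}{2}\left(\frac{\chev_{a+1}(x)}{a+1}-\frac{\chev_{a-1}(x)}{a-1}\right)+\mathrm{const}$ for $a\geq 2$ checks out via $x=\cos\theta$ (the bracketed part differentiates to $2\chev_a$ and the prefactor $\tfrac{1}{2}$ finishes it), the low-order exceptions $\int\chev_0=\chev_1$ and $\int\chev_1=\tfrac14\chev_2$ are right, and your bookkeeping of which $c_a$ feeds which $\chev_b$ is exactly the delicate point: the $\chev_1$-coefficient $(c_0-c_2)/2$ correctly absorbs the halved leading term $\tfrac{c_0}{2}\chev_0$, the $\chev_2$-coefficient draws on the exceptional rule for $\int\chev_1$, and the convention $c_j=0$ for $j>k$ collapses the three displayed cases into the single rule $(c_{a-1}-c_{a+1})/(2a)$, which matches the boundary cases $a=k$ and $a=k+1$. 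The definite-integral computation $\int_{-1}^1\chev_a(x)\,\mathrm{d}x=\frac{1+(-1)^a}{1-a^2}$ for $a\neq 1$, with the $a=1$ term vanishing by oddness, then yields the stated formula. You are also right that the upper summation limit $n$ in the displayed statement is a typo for $k$: elsewhere in the paper $n$ is the ambient dimension of the curve and plays no role here. Given that the authors intend this as a quoted fact, a citation alone would have sufficed, but your self-contained verification is sound and catches a typo that a pure citation would have propagated.
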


The following fact will be useful later. Let $F(t):=\int_{-1}^tf(s)\mathrm{d}s$ for $t\in[-1,1]$, then
\[
\left\|F\right\|_\infty\leq \|f\|_1\leq 2\|f\|_\infty.
\]
Hence, Theorem~\ref{theo:approx_error} allows us to also control the error of the integral approximation.

\subsection{Bernstein ellipse for the speed}

The following theorem shows that the conditions of Theorem~\ref{theo:approx_error} are satisfied for the speed $\varphi$ and gives possible $\rho$s we can take.

\begin{theo}\label{theo:approx_error_speed}
Let $\varphi$ be given as in \eqref{eq:normspeed}, non-vanishing in $I$. Let $\rho>1$ be such that $\rho$ does not exceed
\begin{equation*}
    \rho^*(\gamma):=\min_{\substack{z\in\bbC\\ \|\gamma'(z)\|_2=0}}\left\{\frac{|z+1|+|z-1|+ \sqrt{(|z+1|+|z-1|)^2 -4}}{2}\right\},
\end{equation*}
then $\varphi$ admits an analytic extension $\varphi_{\mathrm{an}}$ to the interior of $E_\rho$ and
\[
\|\varphi_{\mathrm{an}}\|_{E_\rho}\leq \rho^d\|\varphi\|_\infty.
\]
\end{theo}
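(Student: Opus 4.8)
The plan is to reduce everything to the single polynomial $q(z):=\sum_{i=1}^n \gamma_i'(z)^2$, which is the analytic continuation of $\|\gamma'(\cdot)\|_2^2$ and has degree at most $2(d-1)$. Writing $C:=\int_{-1}^1\|\gamma'(s)\|_2\,\mathrm{d}s>0$ (positive because $\varphi$ is non-vanishing, hence $q>0$, on $I$), we have $\varphi=\sqrt{q}/C$ on $I$ with the positive square root. So the whole statement is about continuing $\sqrt q$ to the interior of $E_\rho$ and bounding it there.

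First I would unwind the geometric meaning of $\rho^*(\gamma)$. For a point $z\in\bbC$ the quantity
\[
\rho(z):=\frac{|z+1|+|z-1|+\sqrt{(|z+1|+|z-1|)^2-4}}{2}
\]
is exactly the unique $\rho\ge 1$ for which $z\in E_\rho$: solving $|z+1|+|z-1|=\rho+\rho^{-1}$, the relation satisfied by the points of $E_\rho$ (whose foci are $\pm1$), for the root $\ge 1$ gives precisely this expression. Consequently $\{z:\rho(z)<\rho\}$ is the open region bounded by $E_\rho$ and $\{z:\rho(z)>\rho\}$ its exterior. Since $\rho^*(\gamma)=\min\{\rho(z_0):q(z_0)=0\}$, the hypothesis $\rho<\rho^*(\gamma)$ forces $\rho(z_0)>\rho$ for every zero $z_0$ of $q$; that is, $q$ has no zeros in the closed elliptic disc bounded by $E_\rho$.

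With zero-freeness in hand, the analytic extension is routine. The interior of $E_\rho$ is a simply connected domain on which $q$ does not vanish, so an analytic logarithm of $q$ exists there and $\sqrt q:=\exp(\tfrac12\log q)$ defines an analytic branch. I would fix the branch by requiring it to be positive at an interior real point such as $t=0$; since $[-1,1]$ lies in the interior ($\rho(t)=1<\rho$ there) and $q>0$ on it, this branch stays positive on all of $[-1,1]$ and hence agrees with $\varphi$ there. Setting $\varphi_{\mathrm{an}}:=\sqrt q/C$ gives the required analytic extension. The strict inequality $\rho<\rho^*$ even keeps the zeros off the closed disc, so $\varphi_{\mathrm{an}}$ is analytic on a neighborhood of $\overline{E_\rho}$ and the maximum over $E_\rho$ is attained.

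Finally, for the norm bound I would use that $|\sqrt{q(z)}|=|q(z)|^{1/2}$ for any branch, so $|\varphi_{\mathrm{an}}(z)|=|q(z)|^{1/2}/C$ and therefore $\|\varphi_{\mathrm{an}}\|_{E_\rho}=\|q\|_{E_\rho}^{1/2}/C$. Applying Bernstein's inequality (Theorem~\ref{theo:bernsteinineq}) to the polynomial $q$ of degree $\le 2(d-1)$ yields $\|q\|_{E_\rho}\le\rho^{2(d-1)}\|q\|_\infty$. Since $q\ge 0$ on $I$, we also have $\|\varphi\|_\infty=\|q\|_\infty^{1/2}/C$, and combining the two gives
\[
\|\varphi_{\mathrm{an}}\|_{E_\rho}\le \rho^{d-1}\frac{\|q\|_\infty^{1/2}}{C}=\rho^{d-1}\|\varphi\|_\infty\le \rho^d\|\varphi\|_\infty,
\]
using $\rho>1$ in the last step. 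The main obstacle is conceptual rather than computational: recognizing that the complicated expression defining $\rho^*(\gamma)$ is the Bernstein parameter of a point, and translating the hypothesis into ``no branch points of $\sqrt q$ inside $E_\rho$''; once this is clear, simple-connectedness handles the continuation and Bernstein's inequality handles the size.
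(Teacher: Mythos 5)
Your proposal is correct and follows essentially the same route as the paper's proof: interpreting the defining expression of $\rho^*(\gamma)$ as the Bernstein parameter of a point (via $|z+1|+|z-1|=\rho+\rho^{-1}$), deducing that $q=\|\gamma'\|_2^2$ has no zeros inside $E_\rho$, extending $\sqrt{q}$ analytically, and bounding it by applying Bernstein's inequality (Theorem~\ref{theo:bernsteinineq}) to $q$ and taking square roots. You are in fact slightly more careful than the paper: you justify the branch choice via simple-connectedness, and your correct degree count $\deg q\le 2(d-1)$ yields the marginally sharper bound $\rho^{d-1}\|\varphi\|_\infty$, whereas the paper loosely takes degree $2d$ and gets $\rho^{d}\|\varphi\|_\infty$.
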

\begin{proof}
Note that $\varphi$ is the square root of the polynomial $\|\gamma'\|^2$. To analytically extend such a function to the interior of the ellipse $E_\rho$, we need that no complex root $z$ of $\|\gamma'\|^2$ lies inside $E_\rho$. Now, $z$ lies inside the interior of $E_\rho$ if and only if
\[
\rho \geq \frac{|z+1|+|z-1|+ \sqrt{(|z+1|+|z-1|)^2 -4}}{2}.
\]
Recall that $E_\rho$ is the ellipse with foci $-1$ and $1$ and focal distance $\rho+\rho^{-1}$, so $z$ lies in its interior if and only if $|z-1|+|z+1|\leq \rho+\rho^{-1}$. The latter is equivalent to the above inequality for $\rho$.

Since $\varphi_{\mathrm{an}}^2$ is a polynomial of degree $2d$ on $I$, it is a polynomial of degree $2d$ on the interior of $E_\rho$. Hence, by Bernstein's inequality (Theorem~\ref{theo:bernsteinineq}),
\[
\|\varphi_{\mathrm{an}}^2\|_{E_\rho}\leq \rho^{2d}\|\varphi_{\mathrm{an}}^2\|_\infty=\rho^{2d}\|\varphi^2\|_\infty.
\]
Since the square root of the maximum is the maximum of the square root, the desired bound follows. 
\end{proof}

\begin{remark}
Note that Theorem~\ref{theo:approx_error_speed} gives a conservative bound for $\|\varphi\|_{E_\rho}$, which we use for giving an upper bound for the complexity. Since we can precompute this quantity, we do this in the off-line part of \nameref{alg:newsampler}, so that we get better run times.
\end{remark}

Observe that $\rho^*(\gamma)$ is optimal. In this way, if we want a Chebyshev interpolant $\chevI_k(\varphi)$ such that $\|\chevI_k(\varphi)-\varphi\|_\infty\leq\varepsilon$, then the degree of this interpolant has to satisfy
\begin{equation}\label{eq:degbound}
    k \geq \frac{1}{\log\rho^*(\gamma)}\left(\ln\frac{1}{\varepsilon}+\log\|\varphi\|_{E_{\rho^*(\gamma)}}+2-\log(\rho^*(\gamma)-1) \right).
\end{equation}
by Theorem~\ref{theo:approx_error}.

We now give theoretical bounds for $\rho^*$. But before, let us define the geometric parameter that will appear in these bounds.

\begin{defi}\label{defi:conditiongammasampling}
Let $\gamma:I\rightarrow \bbR^n$ be the parametrization of a real polynomial curve of degree $d$ such that the polynomial in the $i$th component, $\gamma_i$, is given by
\[\gamma_i=\sum_j\gamma_{i,j}T^j.\]
Then the \emph{condition number for sampling $\gamma$} is quantity 
\begin{equation}\label{eq:conditiongammasampling}
   \cond(\gamma):=\frac{\|\gamma'\|_{o}}{\inf_{t\in I}\|\gamma'(t)\|_2}\in [1,\infty]
\end{equation}
where $\|\gamma'\|_{o}=\sum_{i=1}^n\sum_{j}j|\gamma_{i,j}|$ is the sum of the absolute value of all coefficients of $\gamma$ multiplied each by the degree of their term.
\end{defi}

\begin{remark}
The condition number for sampling $\gamma$, $\cond(\gamma)$, is finite as long as $\varphi$ is non-vanishing in $I$. Note that the idea is that the nearer is $\gamma$ to have zero speed at a point, the harder it is to sample a random point in it.
\end{remark}

\begin{theo}\label{theo:boundsrho}
Let $\gamma:I\rightarrow \bbR^n$ be the parametrization of a real polynomial curve of degree $d$ with non-vanishing speed. Then
\[
\rho^*(\gamma)\geq 1+\frac{1}{\enumber \cdot d\cdot \cond(\gamma)}.
\]
\end{theo}
\begin{proof}
Since $\varphi$ is non-vanishing, $\cond(\gamma)< \infty$. Let $\varepsilon\in(0,1/d)$ and consider $I_\varepsilon:=\{z\in\bbC\mid \dist(z,I)\leq\varepsilon\}$. Now, by~\cite[Proposition 3.6]{TCTcubeI-journal}, for each $i$, the map
\[
I_\varepsilon\ni z \mapsto |\gamma_i'(z)|/\|\gamma_i'\|_o
\]
is $(\enumber \cdot d)$-Lipschitz. Hence, the map
\[
I_\varepsilon\ni z \mapsto \|\gamma'(z)\|_2/\|\gamma'\|_o.
\]
is $(\enumber \cdot d)$-Lipschitz. In this way, if $\varepsilon=1/(\enumber \cdot d\cdot \cond(\gamma))$, we have that $\|\gamma'\|^2$ does not have zeros inside $I_\varepsilon$. Now, if for some $\rho\geq 1$, $E_\rho\subseteq I_\varepsilon$, then $\rho\leq\rho^*(\gamma)$.

By the definition of $E_\rho$, $E_\rho\subseteq I_\varepsilon$ if and only if a) $(\rho+\rho^{-1})/2-1\leq \varepsilon$ (major semiaxis bound) and b) $\sqrt{\rho^2+\rho^{-2}-2}/2\leq \varepsilon$ (minor semiaxis bound). 

Now, $\rho=1+\varepsilon$, with $\varepsilon=1/(\enumber \cdot d\cdot \cond(\gamma))\leq 1$, satisfies these inequalities. Hence the claim follows.
\end{proof}

\subsection{Acceleration through splitting the curve}\label{subsec:splittinginterval}

Whenever we split the curve, we should expect the value of $\rho^*(\gamma)$ to increase. The reason for this is that, after renormalization of a smaller interval, the zeros of $\gamma'$ are further away and so the value of $\rho$ should increase. However, we observe that this increase will depend on the value of $\rho^*(\gamma)$, so there is not a uniform constant factor improvement independent of $\rho^*(\gamma)$.

The above paragraph suggests that we can just perform a fixed number of binary subdivision steps to accelerate the algorithm. This strategy does indeed accelerate the sampler, as shown by experiments (see subsection~\ref{subsec:splittingintervalimp} and Figure~\ref{fig:splitting}).

To conclude, note that as we compute $\rho^*(\gamma)$, we have to compute also the complex roots of $\gamma'$. Therefore we can split the interval $I$ along the points
\[
\Re z_1,\ldots,\Re z_d
\]
where $z_1,\overline{z_1},\ldots,z_d,\overline{z_d}$ are the complex roots of $\gamma'$. This subdivision accelerates the algorithm significantly as it forces the roots of $\gamma'$ to lie on the endpoints of each interval or outside.

\subsection{The sampler}

We give the sampler for the curve $\gamma:I\rightarrow \bbR^n$, \nameref{alg:newsampler}, without indicating the subdivision procedure. We observe that excepting the last call, the sampler performs the majority of its operations off-line, so they don't have to be repeated in each call.

\begin{algorithm}
\DontPrintSemicolon
\SetKwInput{input}{Input}
\SetKwInput{output}{Output}
\caption{\textsc{CurveSampler}}\label{alg:newsampler}
\input{$\gamma:I\rightarrow \mathbb{R}^n$ of degree $d$\\
$\ell\in\bbN$}
\output{Approximate sample $\fkt$ of $\varphi:=\norm{\gamma'}_2/\int_{-1}^1\norm{\gamma'}_2(s)\mathrm{d}s$}
$\varphi\gets\norm{\gamma'}_2/\int_{-1}^1\norm{\gamma'}_2(s)\mathrm{d}s$\Comment*[r]{off-line}
$Z\gets \{z\in \bbC\mid \gamma'(z)=0\}$\Comment*[r]{off-line}
$\rho^*\gets\min_{z\in Z}\frac{|z+1|+|z-1|+ \sqrt{(|z+1|+|z-1|)^2 -4}}{2}$\Comment*[r]{off-line}
$M\gets \|\varphi\|_{E_{\rho^*}}$\Comment*[r]{off-line}
$k\gets 5+\ell+\lceil (\log M-\log(\rho^*-1))/\log\rho^*\rceil$ \Comment*[r]{off-line}
$\tilde{\varphi}\gets\chevI_k(\varphi)/\int_{-1}^1\chevI_k(\varphi)(s)\mathrm{d}s$\Comment*[r]{off-line}
$\tilde{\Phi}\gets$Integral of $\chevI_k(\varphi)$\Comment*[r]{off-line}
$\ell_B\gets 1+\ell+\max\{0,\log\|\tilde{\varphi}'\|_\infty\}$\Comment*[r]{off-line}
$\fkt\gets$\nameref{alg:bisectionsampler}$(\tilde{\varphi},\ell_B)$\;
Output $\fkt$\;
\end{algorithm}

\subsection[Complexity of CurveSampler]{Complexity of \nameref{alg:newsampler}}

Recall that we are working in the BSS model with square roots, so we assume that we can evaluate $\|\gamma'(t)\|_2$ exactly. Our main theorem is the following one.

\begin{theo}\label{thm:main}
Let $\gamma :I \rightarrow \mathbb{R}^n$ be a polynomial parameterized curve. \nameref{alg:newsampler} is an efficient sampler for $\fkt\in I$ uniformly distributed with respect to the normalized speed of $\gamma$. Moreover, it performs
\[
\Oh(\ell^2(1+\log\cond(\gamma))^2\cond(\gamma)^2)
\]
off-line arithmetic operations, where $\Oh$ has constants depending on the degree of $\gamma$; and
\[
\Oh(\ell^3(1+\log d\cond(\gamma))^3d^3\cond(\gamma)^3)
\]
on-line arithmetic operations to achieve an error of $2^{-\ell}$ in the TV distance. 
\end{theo}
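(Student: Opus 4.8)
The plan is to establish the two assertions---that \nameref{alg:newsampler} is an efficient sampler achieving TV error $2^{-\ell}$, and that it meets the stated off-line and on-line operation counts---by decomposing both the error and the cost along the two stages from which the algorithm is built: the Chebyshev approximation $\tilde\varphi$ of the normalized speed $\varphi$, and the run of \nameref{alg:bisectionsampler} on $\tilde\varphi$. For the error I would introduce the auxiliary random variable $\tilde\fkt\sim\tilde\varphi$ and split, via the triangle inequality,
\[
\distTV(\fkt,\tilde\fkt_{\ell_B})\leq \distTV(\fkt,\tilde\fkt)+\distTV(\tilde\fkt,\tilde\fkt_{\ell_B}),
\]
where $\fkt\sim\varphi$ is the target and $\tilde\fkt_{\ell_B}$ is the actual output of \nameref{alg:bisectionsampler}. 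The first summand is the \emph{approximation error} and the second the \emph{bisection error}; the additive constants in the definitions of $k$ and $\ell_B$ in the pseudocode are tuned so that the two summands together stay at most $2^{-\ell}$.

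For the approximation error, Proposition~\ref{prop:TVL1} reduces the task to bounding $\|\varphi-\tilde\varphi\|_1$. I would first bound $\|\varphi-\chevI_k(\varphi)\|_\infty$ using Theorem~\ref{theo:approx_error} with $\rho=\rho^*(\gamma)$, whose hypothesis (analyticity inside $E_{\rho^*}$) and the value $M=\|\varphi\|_{E_{\rho^*}}$ are furnished by Theorems~\ref{theo:approx_error_speed} and~\ref{theo:bernsteinineq}; the degree $k$ in the algorithm is exactly the one that \eqref{eq:degbound} prescribes to push this sup-error below $2^{-\ell-3}$. Passing from $\chevI_k(\varphi)$ to its normalization $\tilde\varphi$ costs only a constant factor: writing $Z=\int_{-1}^1\chevI_k(\varphi)$ one has $|Z-1|\leq\|\varphi-\chevI_k(\varphi)\|_1\leq 2\|\varphi-\chevI_k(\varphi)\|_\infty$, whence $\|\varphi-\tilde\varphi\|_1=\Oh(\|\varphi-\chevI_k(\varphi)\|_\infty)$, giving the bound $2^{-\ell-1}$. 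For the bisection error I would invoke Theorem~\ref{theo:bisection} applied to $\tilde\varphi$ on $[-1,1]$, which bounds $\distTV(\tilde\fkt,\tilde\fkt_{\ell_B})$ by $2^{-\ell_B}\cdot 2\cdot\|\tilde\varphi'\|_\infty$; the choice $\ell_B=1+\ell+\max\{0,\log\|\tilde\varphi'\|_\infty\}$ forces this below $2^{-\ell}$, and enlarging the additive constant by one makes it at most $2^{-\ell-1}$. Summing the two contributions yields the TV guarantee, and the two precomputation and sampling programs fit the run-time requirements of Definition~\ref{def:sampler}.

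For the complexity I would tally the stages of \nameref{alg:newsampler}. The off-line work is dominated by forming $\chevI_k(\varphi)$: evaluating $\varphi$ at the Chebyshev nodes and assembling the coefficients through Proposition~\ref{prop:interpolationcomputation} takes $\Oh(k^2)$ arithmetic operations, while the primitive (Proposition~\ref{prop:interpolationintegrals}), the roots $Z$, and the scalars $\rho^*,M,\ell_B$ are lower order, with constants depending on $d$. The on-line work is the bisection loop: $\ell_B$ iterations, each evaluating the precomputed CDF $\tilde\Phi$ at one point through the Clenshaw-type recursion of Proposition~\ref{prop:evalchebyshev} at cost $\Oh(k)$. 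The decisive step is to express $k$ and $\ell_B$ through $\cond(\gamma)$. Theorem~\ref{theo:boundsrho} gives $\rho^*(\gamma)-1\geq 1/(\enumber\, d\,\cond(\gamma))$, hence $1/\log\rho^*=\Oh(\cond(\gamma))$ and $-\log(\rho^*-1)=\Oh(\log\cond(\gamma))$; with $\log M=\Oh(\log\cond(\gamma))$ this turns \eqref{eq:degbound} into $k=\Oh((\ell+\log\cond(\gamma))\,\cond(\gamma))$, and since $\log\|\tilde\varphi'\|_\infty$ is only logarithmic in $k$ and $\cond(\gamma)$, one gets $\ell_B=\Oh(\ell+\log\cond(\gamma))$. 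Substituting these into the per-stage counts $\Oh(k^2)$ and $\Oh(\ell_B\,k)$, together with $\rho^*(\gamma)-1=\Theta(1/\cond(\gamma))$, yields the stated complexity bounds.

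The main obstacle is this last, complexity, step rather than the error analysis. Two points need care. First, controlling $\ell_B$ requires a bound on $\|\tilde\varphi'\|_\infty$ for the degree-$k$ polynomial $\tilde\varphi$; this is not among the results quoted above and would be supplied by a Markov--Bernstein inequality $\|\tilde\varphi'\|_\infty\leq k^2\|\tilde\varphi\|_\infty$, together with $\|\tilde\varphi\|_\infty=\Oh(\cond(\gamma))$ (which follows since $\max_{[-1,1]}\|\gamma'\|_2\leq\|\gamma'\|_o$ and $\int_{-1}^1\|\gamma'\|_2\geq 2\min\|\gamma'\|_2$). Second, the entire reduction hinges on converting the analytic-continuation radius $\rho^*(\gamma)$ into the condition number through Theorem~\ref{theo:boundsrho}, so that $1/\log\rho^*$ contributes the factor $\cond(\gamma)$ while $\log(\rho^*-1)$ and $\log M$ contribute only $\log\cond(\gamma)$; keeping track of these substitutions while verifying that the normalization by $Z$ perturbs neither the error nor the degree is where the bookkeeping is most delicate.
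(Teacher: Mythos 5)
Your proposal is correct and, in its overall architecture, matches the paper's proof exactly: the same triangle-inequality split into a Chebyshev approximation error (handled via Proposition~\ref{prop:TVL1}, Theorems~\ref{theo:approx_error} and~\ref{theo:approx_error_speed}) and a bisection error (Theorem~\ref{theo:bisection}), and the same cost tally of $\Oh(k^2)$ off-line operations for the interpolant and $\Oh(\ell_B k)$ on-line operations via Proposition~\ref{prop:evalchebyshev}, with Theorem~\ref{theo:boundsrho} converting $\rho^*(\gamma)$ into $\cond(\gamma)$ through $1/\log(1+x)\leq 2/x$. The one genuine divergence is how you control $\|\tilde\varphi'\|_\infty$ to bound $\ell_B$: the paper invokes the Chebyshev-theoretic convergence of derivatives, $\|\chevI_k(\varphi)'-\varphi'\|_\infty\leq 16Mk^2(\rho^*)^{3-k}/(\rho^*-1)$, combined with the separate estimate $\|\varphi'\|_\infty\leq\cond(\gamma)$ obtained from $|\varphi'|\leq\|\gamma''\|_2/\|\gamma'\|_2$ and the Lipschitz bound $\|\gamma''\|_2\leq d\|\gamma'\|_o$ of~\cite{TCTcubeI-journal}; you instead apply the Markov brothers' inequality $\|\tilde\varphi'\|_\infty\leq k^2\|\tilde\varphi\|_\infty$ directly to the degree-$k$ polynomial $\tilde\varphi$ together with the elementary bound $\|\tilde\varphi\|_\infty=\Oh(\cond(\gamma))$. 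Your route is more self-contained (it needs neither the derivative-convergence theorem nor any bound on $\varphi'$) and it is actually sharper: taking logarithms gives $\ell_B=\Oh(\ell+\log\cond(\gamma))$ and hence an on-line cost of $\Oh(\ell^2(1+\log\cond(\gamma))^2\cond(\gamma))$, strictly inside the stated $\Oh(\ell^3(1+\log\cond(\gamma))^3\cond(\gamma)^2)$---the paper's extra factors arise precisely because it bounds $\ell_B$ by $\Oh(\ell+\log\cond(\gamma)+\Oh(k^2)/\cond(\gamma))$ without taking the logarithm of the $4k^2(\rho^*)^{2-\ell}$ term. Two further points in your favor: your explicit treatment of the normalization constant $Z$ via $|Z-1|\leq\|\varphi-\chevI_k(\varphi)\|_1$ spells out what the paper compresses into the single inequality $\|\tilde\varphi-\varphi\|_1\leq 2\|\chevI_k(\varphi)-\varphi\|_1$, and your remark that the additive constant in $\ell_B$ must be enlarged by one to reach $2^{-(1+\ell)}$ correctly flags a factor-of-two slip in the paper's own bookkeeping, which your version repairs.
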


\begin{remark}
Even though we are ignoring the complexity of the offline part---many of those parts can be done in $\text{poly}(d)$ arithmetic operations up to the desired degree of precision. This is why we focus on the dependence on the error.
\end{remark}

\begin{proof}[Proof of Theorem~\ref{thm:main}]
 On the one hand, by Theorem~\ref{theo:approx_error},
\begin{equation}
    \|\tilde\varphi-\varphi\|_1\leq 2\|\chevI_k(\varphi)-\varphi\|_1\leq \frac{16M}{\rho^*-1}(\rho^*)^{-k}\leq 2^{-(1+\ell)}
\end{equation}
where the first inequality follows from $\|\varphi\|_1=1$, and so if $\fks\sim\varphi$, then, by Proposition~\ref{prop:TVL1},
\begin{equation}
   \distTV(\fks,\tilde\fkt)\leq 2^{-(1+\ell)}
\end{equation}
where $\fkt\sim\tilde\varphi$. On the other hand, by Theorem~\ref{theo:bisection}, 
\[
\distTV(\tilde\fkt,\fkt)\leq 2^{1-\ell_B}\|\tilde\varphi'\|_\infty\leq 2^{-(1+\ell)}.
\]
Hence,
\[
\distTV(\fks,\fkt)\leq 2^{-\ell}
\]
and so to show that \nameref{alg:newsampler} is an efficient sampler for $\varphi$, we only need to bound the complexity as desired.

For the off-line part, we need to bound the number of arithmetic operations in terms of $\ell$. In line 2, we use some solver; for line 3, we just minimize over the roots found in line 2; for line 4, we only have to parameterize the boundary of $E_{\rho^*}$ and find the minimum, due to the maximum modulus principle; for line 5, we do the assignment; for line 6, we use Proposition~\ref{prop:interpolationcomputation} and the second part of Proposition~\ref{prop:interpolationintegrals}; for line 7, we use the first part of Proposition~\ref{prop:interpolationintegrals}. By observing these, we see that the number of arithmetic operations is at most $\Oh(k^2)$, which by the definition of $k$ in line 5 and Theorems~\ref{theo:approx_error_speed} and \ref{theo:boundsrho} transforms to
\[
\Oh(\ell^2(1+\log d\cond(\gamma))^2d^2\cond(\gamma)^2)
\]
where we use that $1/\log(1+x)\leq 2/x$.

For the on-line part, we perform $\ell_B$ evaluations of $\tilde{\Phi}$, taking each evaluation, by Proposition~\ref{prop:evalchebyshev}, $\Oh(k)$ operation. Thus we perform $\Oh(\ell_Bk)$ arithmetic operation. We have to bound $\ell_B$ now. However, this is equivalent to bounding
$
\|\tilde{\varphi}'\|_\infty
$.

Using the theory of Chebyshev polynomials~\cite{Trefethen2013}, we have that
\[
\|\chevI_k(\varphi)'-\varphi'\|_\infty\leq 16Mk^2(\rho^*)^{3-k}/((\rho^*)-1).
\]
Thus,
\[
\|\tilde{\varphi}'\|_\infty\leq \|\varphi'\|_\infty+\frac{16Mk^2\rho^{3-k}}{\rho-1}\leq \|\varphi'\|_\infty+4k^2(\rho^*)^{2-\ell}.
\]
Now, on the one hand,
\[
4k^2(\rho^*)^{2-\ell}\leq \Oh(\ell^2(1+\log d\cond(\gamma))^2d^2\cond(\gamma)^2);
\]
and on the other hand,
\[
|\varphi'|=\langle \gamma',\gamma''\rangle/\|\gamma'\|^2_2\leq\|\gamma''\|_2/\|\gamma'\|_2 
\]
where $\|\gamma''\|_2\leq d\|\gamma'\|_o$ by~\cite[Proposition 3.6]{TCTcubeI-journal}. Thus
$
    \|\varphi'\|_\infty\leq d\cond(\gamma)
$.
Putting this together, we obtain, the bound
$\ell_B\leq \Oh(\ell+\log d\cond(\gamma)+\ell^2(1+\log d\cond(\gamma))^2d^2\cond(\gamma)^2)$. Hence we are done.
\end{proof}

\section{Implementation and experiments \label{sec:implementation}}

We provide an open-source implementation in {\tt Matlab} of the studied method that can be accessed at
\begin{center}
    \href{https://github.com/TolisChal/sampling_curves}{\tt github.com/TolisChal/sampling\_curves}
\end{center}
It is an original implementation, up to technical modifications\footnote{For example, we allow $2^\ell$ to be any real number and not only a power of $2$.}, of \nameref{alg:newsampler} using \nameref{alg:bisectionsampler} which allows us to sample random points from a given parametric polynomial curve uniformly with respect to the arc-length.

Our implementation relies on a few standard routines from {\tt Matlab}'s toolbox. In particular, we use (i) {\tt chebyshevT()} to evaluate the $k$th degree Chebyshev polynomial $\chev_k(x)$, (ii) {\tt roots()} to compute the zero set of the speed $\norm{\gamma'(t)}^2_2$ for the computation of $\rho^*$ (line 3 in~\nameref{alg:newsampler}), and (iii) {\tt fmincon()} to solve the optimization problem required to compute $\|f\|_{E_\rho}$ (line 4). All computations were performed on a PC with {\tt Intel\textregistered\ Pentium(R) CPU G4400 @ 3.30GHz $\times$ 2 CPU} and {\tt 16GB RAM}.

\subsection{The example of Figure~\ref{fig:curve_sampled}}\label{subsec:fig1example}

In the example of Figure~\ref{fig:curve_sampled}, we execute \nameref{alg:newsampler} to produce a sample of 300 random points. During the execution, we can see that the speed of the curve $\gamma:[-1,1]\ni t\mapsto (3t^2-2t,2t^2)$,
\[
\|\gamma'(t)\|_2=\sqrt{42t^2-24t+4},
\]
is approximated by a Chebyshev approximation of degree $35$. In this example, we took $\ell=4$, so that $2^{-\ell}<0.1$.

\subsection{Random experiments: Table~\ref{tab:experiments}}

In Table~\ref{tab:experiments}, we show the results of performing several random experiments. For these experiments, we consider random parametric polynomial curves $\gamma:[-1,1]\rightarrow \bbR^n$ of degree $d\in\{5,10,15,20\}$ with $n\in \{20,40,60,80,100\}$ components and with error specifications $2^{\ell}\in\{10,100\}$. The coefficients of the polynomials in $\gamma$ are independent, identically distributed standard Gaussian random variables. In the table, we display the degree of the Chebyshev interpolant ($k$), the preprocess time ($Pr.\ T$) in  seconds and the time per generated sample after preprocessing ($T/s$) in seconds.

We can see that the degree $k$ of the Chebyshev interpolant increases with the degree of $d$ of the polynomial curve. The run-time of the preprocessing takes a few seconds, while the time per sample after preprocessing is smaller than $1$ second for every instance. 

\subsection{A curve of degree 10: Figures~\ref{fig:degree_error} and \ref{fig:splitting}}\label{subsec:splittingintervalimp}

In Figures~\ref{fig:degree_error} and \ref{fig:splitting}, we consider a parametric polynomial curve $\gamma:[-1,1]\rightarrow \bbR^{50}$ of degree $10$. In both cases, we plot the degree of the used Chebyshev interpolant (k) with respect to the inverse of the error $2^{\ell}$.

In Figure~\ref{fig:degree_error}, we simply show the evolution of the degree. In Figure~\ref{subsec:splittinginterval}, we illustrate the theoretical discussion of subsection~\ref{subsec:splittinginterval} by plotting the degree used when we don't split the interval (blue line) against the maximum degree used when we split the interval into four sub-intervals. This shows that a few subdivisions can significantly reduce the degree of the used Chebyshev interpolants.

\subsection{A polynomial density: Figures~\ref{fig:degree_time} and~\ref{fig:degree_degree}}

We consider the curve $\gamma:[-1,1]\ni t\mapsto (1+T+\cdots+T^d)(1,1,1)\in \bbR^3$ to have an explicit example where we can see the evolution of the algorithm with respect the degree $d$. In Figure~\ref{fig:degree_time}, we can see the the run-time per sample after preprocessing against $d$; and, in Figure~\ref{fig:degree_degree}, we can see the degree of the Chebyshev interpolant ($k$) against the degree. Interestingly, both increase sub-linearly with the degree $d$.

\section{Conclusions}\label{sec:conclusions}

In this paper, we initiated the study of the errors in random sampling in the context of algebraic geometry by studying a sampling method in the context of parametric polynomial curves. More precisely, we show that the method in~\cite{Olver13} is efficient for generating random point on a parametric polynomial curve both theoretically (see Theorem~\ref{thm:main}) and in practice (see \S\ref{sec:implementation}). However, this is just the first step towards obtaining error bounds for the methods generating random points in algebraic varieties---needed for theoretical guarantees for applications of TDA in algebraic geometry~\cite{breidingkalisniksturmfelsweinstein2018}.

Interestingly, the experiments suggest that the considered method might be faster than what our theoretical estimates suggest. This discrepancy might be because the bounds in Theorem~\ref{theo:approx_error_speed} or \ref{theo:boundsrho} are too pessimistic. To improve these in the future, we might need a sharper definition of the condition number $\cond(\gamma)$ or substituting some of the inequalities---especially those in which the degree appears---by inequalities that adapt better to the geometry of each curve. Thus, we feel that further theoretical work is needed to fully understand~\nameref{alg:newsampler}. We also note that alternative strategies to \nameref{alg:bisectionsampler}, for examples, those using Newton's method, need to be analyzed in the future.

In an extended version of this paper, we will include: (1) The analysis of the algorithm under the assumption of finite precision. (2) Variations of the condition number $\cond(\gamma)$, introduced in Definition~\ref{defi:conditiongammasampling}, and its analysis in terms of the bit-size and for a random $\gamma$. (3) A comparison with methods relying on approximate arc-length parametrizations or deterministic samples. 

\subsubsection*{Acknowledgements}
The last author is supported by a postdoctoral fellowship of the 2020 ``Interaction'' program of the Fondation Sciences Mathématiques de Paris. He is grateful to Evgenia Lagoda for moral support and Gato Suchen for the mathematical discussions regarding \S\ref{sec:inv}.

The last two authors are partially supported by ANR
JCJC GALOP (ANR-17-CE40-0009), the PGMO grant ALMA, and the PHC GRAPE.

The authors are thankful to Elias Tsigaridas and the referees for useful comments and suggestions that helped improved the quality of this paper.

\newpage

\begin{table}[!htb] 
\centering
\begin{tabular}{|c c||c|c|c||c|c|c|}\hline
 & & \multicolumn{3}{|c|}{$2^{-\ell} = 0.1$} & \multicolumn{3}{|c|}{$2^{-\ell} = 0.01$} \\ \hline
 $d$ & $n$ & $k$ & $Pr.\ T$ & $T/s$ & $k$ & $Pr.\ T$ & $T/s$ \\ \hline\hline
 &20 &20 &0.82 &0.53 &24 &0.94 &0.58 \\
\hline
 &40 &32 &1.08 &0.44 &39 &1.28 &0.47 \\
\hline
5 &60 &27 &0.92 &0.44 &32 &1.09 &0.48 \\
\hline
 &80 &29 &1.04 &0.43 &34 &1.18 &0.45 \\
\hline
 &100 &25 &1.24 &0.54 &29 &1.30 &0.59 \\
\hline\hline
 &20 &34 &1.34 &0.57 &40 &1.51 &0.60 \\
\hline
 &40 &21 &0.76 &0.40 &27 &0.89 &0.42 \\
\hline
10 &60 &35 &1.17 &0.44 &41 &1.37 &0.47 \\
\hline
 &80 &38 &1.31 &0.53 &44 &1.65 &0.56 \\
\hline
 &100 &37 &1.64 &0.64 &43 &2.03 &0.65 \\
\hline\hline
 &20 &35 &1.29 &0.59 &41 &1.70 &0.61 \\
\hline
 &40 &46 &1.53 &0.48 &53 &1.70 &0.50 \\
\hline
15 &60 &47 &3.40 &0.49 &54 &3.54 &0.50 \\
\hline
 &80 &43 &1.65 &0.49 &50 &1.90 &0.53 \\
\hline
 &100 &49 &2.36 &0.63 &55 &2.36 &0.65 \\
\hline\hline
 &20 &67 &2.82 &0.70 &79 &4.41 &0.82 \\
\hline
 &40 &63 &2.07 &0.53 &72 &2.65 &0.57 \\
\hline
20 &60 &51 &2.38 &0.66 &58 &2.24 &0.56 \\
\hline
 &80 &70 &2.60 &0.59 &80 &4.38 &0.76 \\
\hline
 &100 &56 &2.55 &0.62 &63 &3.24 &0.65 \\
\hline
\end{tabular}
\caption{\rm \label{tab:experiments}Random experiments: degree of the Chebyshev interpolant ($k$), preprocessing time ($Pr.\ T$) and time per sample after preprocessing ($T/s$) in terms of the degree ($d$), the ambient dimension ($n$) and the error ($2^\ell$)}
\end{table}

\begin{figure}[!htb]
	\centering
		\includegraphics[width=0.45\textwidth]{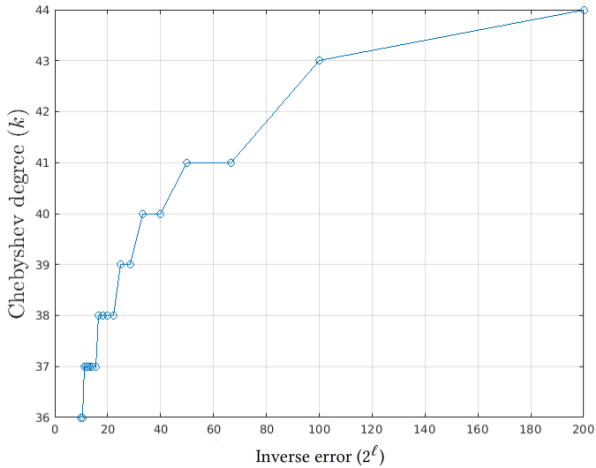}
	\caption{\rm Degree of the Chebyshev interpolant ($k$) against the inverse error ($2^\ell$) for a curve of degree 10 in $\bbR^{50}$.}\label{fig:degree_error}
\end{figure}

\begin{figure}[!htb]
	\centering
		\includegraphics[width=0.45\textwidth]{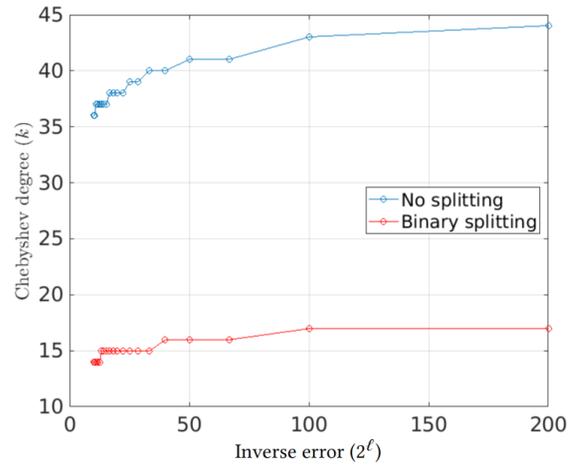}
	\caption{\rm Maximum degree of the Chebyshev interpolant ($k$) against the inverse error ($2^\ell$) for a curve of degree 10 in $\bbR^{50}$ without splitting the interval (blue) and splitting the interval in four equal intervals (red)}\label{fig:splitting}
\end{figure}

\begin{figure}[!htb]
	\centering
		\includegraphics[width=0.45\textwidth]{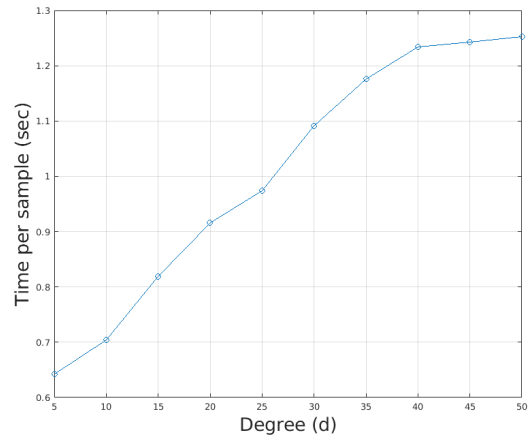}
	\caption{\rm Time per sample after preprocessing against the degree ($d$) for the 3D curve with each coordinate being the degree $d$ polynomial $1 + T + \cdots + T^d$.}\label{fig:degree_time}
\end{figure}

\begin{figure}[!htb]
	\centering
		\includegraphics[width=0.45\textwidth]{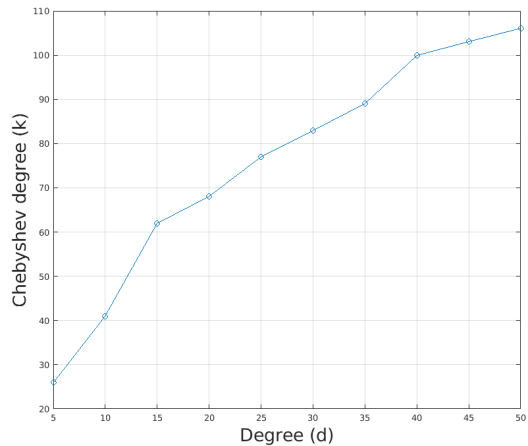}
	\caption{\rm Degree of the Chebyshev interpolant ($k$) against the degree ($d$) for the 3D curve with each coordinate being the degree $d$ polynomial $1 + T + \cdots + T^d$.}\label{fig:degree_degree}
\end{figure}

\bibliographystyle{ACM-Reference-Format}
\bibliography{biblio}

\newpage


\end{document}